\topskip \setlength{\parindent}{0pt} \setlength{\parskip}{5pt plus
\numberwithin{equation}{section}
\newtheorem{theorem}{Theorem}[section]
\newtheorem{corollary}[theorem]{Corollary}
\newtheorem{conjecture}[theorem]{Conjecture}
\newtheorem{lemma}[theorem]{Lemma}
\begin{document}
\def\NN{\mathbb N}
\def\L{\mathcal{L}}
\def\W{\mathcal{W}}
\def\N{\mathcal{N}}
\def\P{\mathcal{P}}
\def\A{\text{P1}}
\def\B{\text{P2}}

\def\({{\rm{(}}}
\def\){{\rm{)}}}
\def\c{{\rm{CN}}}
\def\bn{{\rm{BN}}}
\def\p{{\mathbf p}}
\def\q{{\mathbf q}}
\def\e{{\mathbf e}}
\def\onev{{\mathbf 1}}
\def\ra{\rightarrow}

\def\+{\oplus}
\def\mn{\mbox{-}}
\def\newop#1{\expandafter\def\csname #1\endcsname{\mathop{\rm #1}\nolimits}}
\newcommand{\val}[2]{#1\begin{pspicture}(12pt,9pt)\psline[unit=4pt,fillcolor=black](0,2)(1,0)(2,0)(3,2)\end{pspicture}#2}
\newcommand{\arc}[2]{#1\begin{pspicture}(12pt,9pt)\pscurve[unit=4pt,fillcolor=black](0.2,0)(1.5,1.5)(2.8,0)\end{pspicture}#2}
\newcommand{\fl}[1]{\left\lfloor #1\right \rfloor}
\newcommand{\ceil}[1]{\left \lceil #1 \right\rceil}
\newcommand{\tokI}{\begin{pspicture}(12pt,9pt)\psframe[linewidth=1pt,framearc=.7,fillstyle=vlines*, fillcolor=green](1,0.3)\end{pspicture}}
\newcommand{\tokII}{\begin{pspicture}(12pt,9pt)\psframe[linewidth=1pt,framearc=.7,fillstyle=hlines*,
fillcolor=red](1,0.3)\end{pspicture}}
\newcommand{\tok}{\begin{pspicture}(12pt,9pt)\psframe[linewidth=1pt,framearc=.7,fillstyle=solid,
fillcolor=yellow](1,0.3)\end{pspicture}}
\newcommand{\es}{\begin{pspicture}(12pt,2pt)\psframe[linewidth=1pt,framearc=.4,fillstyle=solid,
fillcolor=blue](1,0.1)\end{pspicture}}
  \psset{xunit=1cm}
\psset{yunit=0.8cm}

\def\cb{\color{blue}}
\def\cg{\color[rgb]{0.1, 0.5, 0.2} }
\def\cp{\color[rgb]{0.5,0.2,0.5} }
\def\myred{\color{red}}
\newrgbcolor{purple}{0.7 0.2 0.7}
\newrgbcolor{orange}{1.0 0.5 0.0}
\newrgbcolor{mygreen}{0.1 0.5 0.2}

\makeatletter
\def\imod#1{\allowbreak\mkern10mu({\operator@font mod}\,\,#1)}
\makeatother

\pagenumbering{arabic}
\pagestyle{headings}
\def\sof{\hfill\rule{2mm}{2mm}}
\date{\today}
\title{Building Nim}
\maketitle

\begin{center}
{\bf Eric Duch\^{e}ne\footnote{Supported by the ANR-14-CE25-0006 project of the French National Research Agency}}\\
{\it Universit\'{e} Lyon 1, LIRIS, UMR5205, F-69622, France}\\
{\tt eric.duchene@univ-lyon1.fr}\\
\vskip 10pt
{\bf Matthieu Dufour}\\
{\it Dept. of Mathematics, Universit\'e du Qu\'ebec \`a Montr\'eal\\
Montr\'eal,  Qu\'ebec H3C 3P8, Canada}\\
{\tt dufour.matthieu@uqam.ca}\\
\vskip 10pt
{\bf Silvia Heubach}\\
{\it Dept. of Mathematics, California State University Los
Angeles\\
Los Angeles, CA 90032, USA}\\
{\tt sheubac@calstatela.edu}\\
\vskip 10pt
{\bf Urban Larsson\footnote{Supported by the Killam Trust}}\\
{\it Dalhousie University, Halifax, Canada}\\
{\tt urban.larsson@yahoo.se}\\
\end{center}

\section*{Abstract}
The game of {\sc nim}, with its simple rules, its elegant solution and its historical importance is the  quintessence of a combinatorial game, which is why it led to so many generalizations and modifications. We present  a modification with a new spin:  {\sc building nim}. With given finite numbers of tokens and stacks, this two-player game is played in two stages (thus belonging to the same family of games as e.g. {\sc nine-men's morris}): first {\sc building}, where players alternate to put one token on one of the, initially empty, stacks until all tokens have been used. Then, the players play {\sc nim}. Of course, because the solution for the game of {\sc nim} is known, the goal of the player who starts {\sc nim} play is a placement of the tokens so that the Nim-sum of the stack heights at the end of {\sc building} is different from 0. This game is trivial if the total number of tokens is odd as the Nim-sum could never be 0, or if both the number of tokens and the number of stacks are even,  since a simple mimicking strategy results in a Nim-sum of 0 after each of the second player's moves. We present the solution for this game for some non-trivial cases and state a general conjecture.

\noindent{\bf Keywords}: Combinatorial game, Nim

\noindent{\bf 2010 Mathematics Subject Classification}: 91A46, 91A05
\thispagestyle{empty}
\section{Introduction}\label{Introduction}

The game of {\sc nim} is believed to have originated in China, but the exact origin is unknown. The earliest references in Europe are in the early 16th century. C.L. Bouton completely analyzed the game in 1901 \cite{cite:nim} and coined the name {\sc nim} (thought to be derived from the German word for ``to take"). The game is played on a finite number of stacks with a finite number of tokens. Two players alternate in moving, by selecting a stack and taking one or more tokens from that stack, until no further move is possible. A player unable to move loses (also called normal play). Figure~\ref{cite:nim} shows an example of a position.
\begin{center}
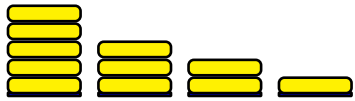
\begin{figure}[h]
\psset{xunit=1cm}
\psset{yunit=0.8cm}
\begin{pspicture}(4.8, 2)
\multirput(0,-0.2 )(1.2, 0){4}{\es}
\multirput(0,0)(0,0.3){5}{\tok}
\multirput(1.2,0)(0,0.3){3}{\tok}
\multirput(2.4,0)(0,0.3){2}{\tok}
\multirput(3.6,0)(0,0.3){1}{\tok}
\end{pspicture}
 \caption{The {\sc nim} position $(5, 3, 2, 1)$}\label{cite:nim}
\end{figure}
\end{center}

Note that there are many  variations on the basic game of {\sc nim}. A famous modification is the game of {\sc wythoff}~\cite{Wyt}; instead of removing tokens from a single stack, a player can also take the same number of tokens from two stacks. Another modification is the arrangement of stacks, such as in {\sc circular nim}~\cite{DH2013}. Other authors have considered {\sc nim} on graphs or simplicial simplexes~\cite{nimgraph}, and in \cite{Lar09} it is shown that for the game of {\sc imitation nim} a simple mimicking prevention rule in {\sc nim} gives the same $\P$-positions as {\sc blocking} {\sc wythoff} \cite{HeLa06}. A standard feature of many such games is that there are only two outcome classes; each game is either an $\N$- or a $\P$-position, that is, a position from which the current or previous player wins, respectively.

Here we present a new variation of {\sc nim} by introducing a {\sc building} stage before {\sc nim} begins. The game of {\sc building nim}, $\bn(n,\ell)$, is played with $n$ tokens on $\ell $ stacks in two stages:

\begin{itemize}
\item The first stage is {\sc building}: the two players take turns choosing one out of the $\ell$ stacks to place an unused token until all tokens have been used, resulting in a position of the form $s=(s_1,\ldots , s_\ell)$, where $s_i$ denotes the respective stack height, given in canonical form ordered from largest to smallest height (and some stacks may be empty);
\item The second stage is {\sc nim}: when all {\sc building} tokens have been used, the game of {\sc nim} starts from position $s$ with the player whose turn it is (that is, the player who did not place the last token in the {\sc building} stage).
\end{itemize}

Obviously, the winning strategy for {\sc building} is closely tied to that of {\sc nim}. The player who places the last token of {\sc building} would like to create a $\P$-position of {\sc nim}. Such a game having successive stages of play can be considered as a variation of {\em sequential compounds} of games~\cite{compound}, which consist in playing successive combinatorial games (with the objective of being the last player to move in the last game). The main difference here is that the {\sc building} stage is a different type of combinatorial game\footnote{It is a well-tempered (fixed-length) scoring game as defined by Johnson in \cite{Johns}. } and also that the opening of the second game depends non-trivially on the closure of the first.

Similar to a {\sc building} position, a generic {\sc nim} position is represented by the vector of stack heights, $(s_1, s_2, \ldots, s_\ell)$. To describe the set of $\P$-positions for {\sc nim}, $\P$({\sc nim}), we define the {\it Nim-sum} $s_1\oplus s_2 \oplus \cdots \oplus s_\ell$, as obtained by translating the values into their binary representation and then adding them  without carry-over.

\begin{theorem}[Bouton \cite{cite:nim}]\label{losingnim}The $\P$-positions for {\sc nim} are those where the Nim-sum of the stack heights is 0, that is
$\P(\text{\sc nim}) =\{(s_1,s_2,\ldots,s_\ell)\mid s_1 \+ s_2 \+ \cdots \+ s_\ell=0\}.$
\end{theorem}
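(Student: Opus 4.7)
The plan is to establish Bouton's characterization by verifying the two defining properties of $\P(\text{\sc nim})$: (i) every legal move from a position with Nim-sum $0$ leads to a position with Nim-sum $\neq 0$; and (ii) from every position with Nim-sum $\neq 0$ there exists a legal move to a position with Nim-sum $0$. Together with the observation that the terminal position $(0,\ldots,0)$ has Nim-sum $0$, a standard backward induction on the total number of tokens then packages (i) and (ii) into the theorem: positions with Nim-sum $0$ are those from which every move leaves an $\N$-position, while positions with nonzero Nim-sum admit a move to a $\P$-position.

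For (i), I would argue directly. A legal Nim move replaces some stack $s_i$ by a strictly smaller value $t_i$, leaving the other entries unchanged, so the new Nim-sum equals the old Nim-sum XOR-ed with $s_i \+ t_i$. Starting from $0$, the result is $s_i \+ t_i$, which is nonzero since $s_i \neq t_i$. This step uses only that $\+$ is associative, commutative, and self-inverse.

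The substantive step is (ii), which requires producing an explicit witness move. Given $s := s_1 \+ \cdots \+ s_\ell \neq 0$, I would let $k$ be the position of the most significant $1$-bit of $s$. Since the $k$-th bit of $s$ is the XOR of the $k$-th bits of the $s_i$, at least one index $i$ must have $s_i$ with a $1$ in bit $k$. Setting $t_i := s_i \+ s$, the $k$-th bit flips from $1$ to $0$, while all bits of $s_i$ above position $k$ are preserved (since $s$ has zero bits there); hence $t_i < s_i$, so replacing $s_i$ by $t_i$ is a legal move. The new Nim-sum is $s \+ (s_i \+ t_i) = s \+ s = 0$, as desired.

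The only real obstacle lies in step (ii): identifying the correct stack via the leading bit of $s$ and verifying the strict inequality $t_i < s_i$ from the binary comparison. Both rely on the elementary fact that two nonnegative integers are ordered by their most significant differing bit. Once (i) and (ii) are in place, the backward induction on the total number of tokens --- which strictly decreases with every Nim move, so the game DAG is well-founded --- delivers the stated characterization.
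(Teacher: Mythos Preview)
Your argument is correct and is precisely the classical proof due to Bouton: verify that the terminal position has Nim-sum $0$, that every move from a zero-Nim-sum position destroys this property, and that from a nonzero-Nim-sum position one can move to a zero-Nim-sum position by targeting a stack whose height has a $1$ in the leading bit of the total Nim-sum. The inequality $t_i<s_i$ is justified exactly as you say.

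There is nothing to compare against, however: the paper does not supply its own proof of this theorem. It is stated as a cited classical result (Bouton~\cite{cite:nim}) and then used as a black box throughout. So your proposal is not an alternative to the paper's approach but rather fills in a proof the paper deliberately omits.
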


By this elegant formula, perfect {\sc nim} play boils down to a simple computation. Hence the {\sc building} stage is our only concern.
We denote by $\A$ the player who starts {\sc building nim}, and by $\B$ the second player. Hence, a phrase like ``$\A$ wins $\bn(n,\ell)$'' is equivalent to saying that this game is an $\N$-position.

We first state the trivial results.
\begin{theorem}[Easy cases] In the game $\bn(n,\ell)$, the following are true.
\begin{enumerate}
\item If $n$ is odd, then $\B$ wins. 
\item If both $n$ and $\ell$ are even, then $\B$ wins. 
\end{enumerate}
\end{theorem}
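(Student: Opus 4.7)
The plan is to handle the two cases independently, using very different ideas.

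For part (1), the key observation is a parity argument about the Nim-sum. Since the least significant bit of a sum of integers agrees with the least significant bit of their XOR, we have $s_1\oplus s_2\oplus\cdots\oplus s_\ell \equiv s_1+s_2+\cdots+s_\ell \pmod 2$. But the total number of tokens used during \textsc{building} is exactly $n$, so the Nim-sum of the resulting position has the same parity as $n$, regardless of how the tokens are distributed. If $n$ is odd, the Nim-sum is odd, hence nonzero. Since $n$ is odd, P1 places the last token of \textsc{building} (the $1^{\text{st}}, 3^{\text{rd}}, \ldots, n^{\text{th}}$ tokens), so P2 plays first in the subsequent \textsc{nim} game from an $\N$-position. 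By Theorem~\ref{losingnim}, P2 wins, and this holds no matter what either player does during \textsc{building}.

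For part (2), I would exhibit an explicit strategy for P2, namely a pairing/mirroring strategy. Since $\ell$ is even, partition the $\ell$ stacks into $\ell/2$ pairs arbitrarily. P2's strategy: whenever P1 places a token on some stack $i$, P2 immediately places a token on the stack paired with $i$. This move is always available, because the stacks can hold arbitrarily many tokens and, since $n$ is even, P2 always gets a turn following any move of P1 (the last token is placed by P2). After each of P2's responses, the two stacks within every pair have equal heights, so each pair contributes $0$ to the Nim-sum; hence the entire Nim-sum is $0$. In particular, when \textsc{building} ends, it is P1's turn to play \textsc{nim} from a position with Nim-sum $0$, which is a $\P$-position by Theorem~\ref{losingnim}, so P1 loses.

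The main obstacle here is essentially non-existent: both arguments are short and structural. The only minor care needed is the bookkeeping about \emph{who} moves first in the \textsc{nim} stage, which is dictated by the parity of $n$ and the rule that the player who did not place the last \textsc{building} token starts \textsc{nim}. Once this is set straight, part (1) reduces to a one-line parity identity and part (2) reduces to verifying that the pairing strategy is always legal and maintains pairwise equal heights after each of P2's moves.
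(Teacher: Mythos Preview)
Your proof is correct and follows essentially the same approach as the paper: part~(1) uses the parity of the Nim-sum to conclude it is nonzero when $n$ is odd (you spell out the parity identity explicitly, whereas the paper merely asserts the conclusion), and part~(2) uses the same mirroring/pairing strategy to force a zero Nim-sum at the end of \textsc{building}. The bookkeeping about who starts \textsc{nim} matches the paper's reasoning as well.
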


\begin{proof} These statements follow directly from Theorem~\ref{losingnim}. When $n$ is odd, then the Nim-sum of the stack heights can never be zero, and therefore {\sc building} ends in a ({\sc nim}) $\N$-position. $\B$ starts {\sc nim}, and hence wins. If both $n$ and $\ell$ are even, then $\B$ can always mirror $\A$'s move in {\sc building}, resulting in pairs of stacks that have the same height. Since $a \+a=0$ for any $a$, {\sc building} ends in a $\P$-position for {\sc nim}, and therefore, since $\A$ starts {\sc nim}, again, she loses the game.
\end{proof}

This leaves the case when $n$ is even and $\ell$ is odd, and we will provide some explicit winning strategies. Specifically, we will prove that in the game of $\bn(n,\ell)$, with $n$ even and $\ell$ odd, the following holds.
\begin{enumerate}
\item If $\ell =3$, then $\B$ wins if and only if $n= 2^k-2$, for some positive integer~$k$;
\item If $\ell >3$, then $\B$ wins if $n\leqslant  \ell+3$;
\item If $\ell =5$, then $\A$ wins if and only if $n\geqslant 10$.
\end{enumerate}

Since the solutions build on particular ideas in the different cases, we will treat these cases as separate results, Theorems~\ref{s=3}, \ref{s>3}, and \ref{BT} respectively. Let us begin with some preliminary observations.

\section{Basic strategies and Nim-sum facts}

\begin{lemma} \label{strats} Consider an instance of $\bn(n, 3)$ for even $n\geqslant 1$ in {\sc building} play. At each turn $\A$ can force a position of the form $(y, x, x)$, with $y\geqslant x$, (Strategy I), while $\B$ at each turn can force a position $(z, x, y)$ with $z=x + y$ (Strategy II). \end{lemma}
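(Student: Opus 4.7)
The two claims are independent assertions about what each player can unilaterally enforce, so I will verify them separately by induction on the number of moves the strategy-owner has made. For each strategy the task is to exhibit a concrete legal reply to every opponent move that re-establishes the prescribed shape of the canonical position, starting from the first move.

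For Strategy I, the invariant is that after $\A$'s move the position is $(y,x,x)$ with $y\geq x$; the base case $(1,0,0)$ arises from $\A$'s opening move. For the inductive step, if $y>x$ then $\B$'s move produces either $(y+1,x,x)$ or $(y,x+1,x)$, and $\A$ replies on the tallest stack or on the remaining $x$-stack to reach $(y+2,x,x)$ or $(y,x+1,x+1)$ respectively, with $y\geq x+1$ following from $y>x$. If $y=x$, then every $\B$-move yields $(x+1,x,x)$, and $\A$ replies on the tallest stack to reach $(x+2,x,x)$. In each case $\A$'s reply is legal (a token remains, since $\A$ is moving) and restores the invariant.

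For Strategy II, I introduce the \emph{defect} $d:=L-S-T$ of the canonical position $(L,S,T)$, where $L\geq S\geq T$; the condition $z=x+y$ is precisely $d=0$. After $\A$'s opening move $(1,0,0)$ has $d=1$, and $\B$ plays on an empty stack to reach $(1,1,0)$ with $d=0$. For the inductive step, assume $d=0$ after $\B$'s previous turn, so $L=S+T$, and consider each stack $\A$ might increment, with careful tie-breaking (note that $d=0$ together with $L=S$ forces $T=0$, and that $S=T>0$ triggers a reorder when $\A$ raises the smallest stack). A direct check in each subcase shows that after $\A$'s move the defect lies in $\{-1,+1\}$. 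If the new defect is $+1$, $\B$ plays on the new middle stack; if it is $-1$, $\B$ plays on the new largest. The one borderline subcase is when $d=+1$ is reached via $L=S+1$ and $T=0$: here $\B$'s move on the middle ties the two largest, giving $(S+1,S+1,0)$ with $d=0$ as required. The main obstacle throughout is this tie-breaking bookkeeping, but it reduces to a finite, elementary case split rather than any deeper structural difficulty.
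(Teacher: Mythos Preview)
Your proof is correct and follows essentially the same approach as the paper: an induction over pairs of moves, with a case split on the opponent's possible placements and an explicit response in each case. The paper presents the case analysis via game-tree figures rather than your defect quantity $d=L-S-T$, but the underlying responses (for Strategy~I: copy on the tall stack or equalize the two short ones; for Strategy~II: if the opponent raised the largest then raise the middle, otherwise raise the largest) are identical. Your defect bookkeeping is a tidy way to phrase the Strategy~II invariant, but it does not change the argument.
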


\begin{proof}
In each case, we will show the claim by induction, considering two moves, one by each player, for the induction step. We will illustrate the relevant moves by showing the possible moves in a game tree. We use yellow tokens \tok \hspace{0.7cm} for the current position, a  green token \tokI \hspace{0.7cm} for a move made by $\A$, a red token \tokII \hspace{0.7cm} for a move made by $\B$, and \es \hspace{0.7cm} to indicate the stacks. In the game trees, we usually  show only one of two symmetric moves. Note that since $n$ is even, $\B$ makes the last {\sc building} move.

The first move for $\A$ is $(1,0,0)$ and  $\B$ can respond with $(1,1,0)$, corresponding to the desired form for the respective strategy. For the induction step for Strategy I, we need to show that if $\A$ has played to a position of the form $(y, x, x)$ with $y \geqslant x$, then no matter how $\B$ responds, $\A$ can counter to once more create such a position. Figure~\ref{strat1} shows the possible moves of $\B$ and the response  by $\A$. In each case, the resulting position is of the desired form.  Note that if $\A$ plays this strategy, then the final position after $\B$'s last move is of the form $(y, x, x)$ or $(y, x+1, x)$, with $y \geqslant x+1$.

Now look at the strategy for $\B$ and assume he leaves the stacks in the form $(z, x, y)$ with $z=x + y$ after his move. Figure~\ref{strat2} shows the possible moves of $\A$ and the response of $\B$.  Once more, it is possible to obtain a position of the desired type.
\end{proof}

 \begin{center}
 \begin{figure}[h]
\begin{pspicture}(9.4, 4.5)
\multirput(3,3.3 )(1.2, 0){3}{\es}
\multirput(3,3.5)(0,0.3){3}{\tok}
\multirput(4.2,3.5)(0,0.3){1}{\tok}
\multirput(5.4,3.5)(0,0.3){1}{\tok}
\psline{->}(3.2,2.8)(2.3,2)
\multirput(0,0.2 )(1.2, 0){3}{\es}
\multirput(0,0.4)(0,0.3){3}{\tok}
\multirput(1.2,0.4)(1.2,0){2}{\tok}
\multirput(0,1.3)(0,0.3){1}{\tokII}
\multirput(0,1.5)(0,0.3){1}{\tokI}
\psline{->}(6.2,2.8)(7.1,2)
\multirput(6,0.2 )(1.2, 0){3}{\es}
\multirput(6,0.4)(0,0.3){3}{\tok}
\multirput(7.2,0.4)(1.2,0){2}{\tok}
\multirput(7.2,0.7)(0,0.3){1}{\tokI}
\multirput(8.4,0.7)(0,0.3){1}{\tokII}
\end{pspicture}
 \caption{Sequence of moves from a position of type   $(y, x, x)$ to one of the same type two moves later.}\label{strat1}
\end{figure}
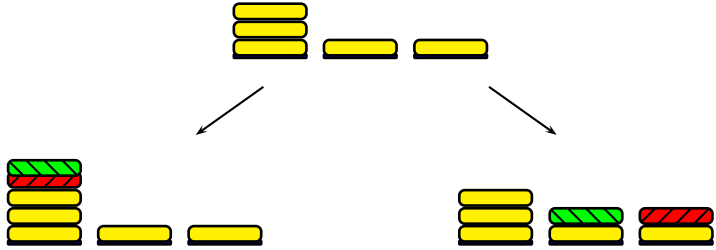
\end{center}

 \begin{center}
 \begin{figure}[h]
\begin{pspicture}(9.4, 4.8)
\multirput(3,3.3 )(1.2, 0){3}{\es}
\multirput(3,3.5)(0,0.3){3}{\tok}
\multirput(4.2,3.5)(0,0.3){2}{\tok}
\multirput(5.4,3.5)(0,0.3){1}{\tok}
\psline{->}(3.2,2.8)(2.3,2)

\multirput(0,0.2 )(1.2, 0){3}{\es}
\multirput(0,0.4)(0,0.3){3}{\tok}
\multirput(0,1.3)(0,0.3){1}{\tokI}
\multirput(1.2,0.4)(0,0.3){2}{\tok}
\multirput(2.4,0.4)(0,0.3){1}{\tok}
\multirput(1.2,1)(0,0.3){1}{\tokII}

\psline{->}(6.2,2.8)(7.1,2)
\multirput(6,0.2 )(1.2, 0){3}{\es}
\multirput(6,0.4)(0,0.3){3}{\tok}
\multirput(6,1.3)(0,0.3){1}{\tokII}
\multirput(7.2,0.4)(1.2,0){2}{\tok}
\multirput(7.2,0.7)(0,0.3){1}{\tok}
\multirput(7.2,1)(0,0.3){1}{\tokI}

\end{pspicture}
\caption{Sequence of moves from a position of type $(z, x, y)$ with $z=x + y$ to one of the same type two moves later.}\label{strat2} \end{figure}
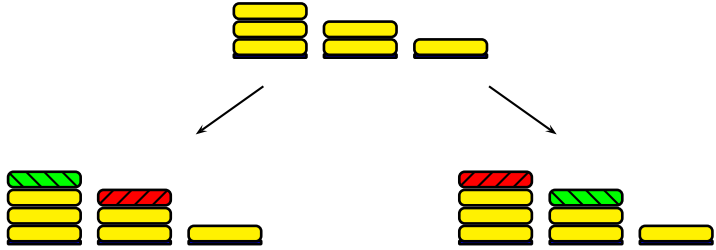
\end{center}

Let us present some basic results on the Nim-sum operator.

\begin{lemma}[Nim-sum facts] \label{DSfact} Let $x$, $y$, and $t_i, i=1, \ldots , \ell$, be integers. We have the following facts for the Nim-sum:
\begin{enumerate}
\item[(NS1)] $x\+y=0$ if and only if $x = y$.
\item[(NS2)] $x\+ y \leqslant  x+y$.
\item[(NS3)] $x\+ (x+1)=2^{k}-1$ for some $k\geqslant 1$.
\item[(NS4)] If $x=2^h-1$ for some $h$, then $x\+ (x+1)=2^{h+1}-1$; otherwise, $x\+ (x+1)< x$.
\item[(NS5)] $y=2^k-1$ for some $k \geqslant 1$ if and only if $x\+ (y-x) =y$ for $0 \leqslant  x \leqslant  y$.
\item[(NS6)] If $(s_1+1)  \+ s_2 \+ \cdots  \+ s_{\ell}=0$, then $s_1  \+ s_2 \+ \cdots \+ s_{\ell} = 2^k-1$ for some  $k \geqslant 1$.
\item[(NS7)] If $y > s_1+s_2+\cdots + s_\ell$, then $y \+s_1\+s_2\+\cdots \+ s_\ell >0.$
\end{enumerate}
\end{lemma}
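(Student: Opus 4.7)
My plan is to prove each of the seven facts separately, leveraging only the bit-level definition of the Nim-sum. For (NS1) I would note that bit $i$ of $x\+y$ is $0$ exactly when the $i$-th bits of $x$ and $y$ agree, so $x\+y=0$ iff all bits agree, iff $x=y$. For (NS2), bit-by-bit addition dominates XOR because addition also accumulates carries that XOR discards.

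For (NS3) and (NS4), the key observation is that $x+1$ flips exactly the trailing run of $1$-bits of $x$ together with the next $0$-bit above. If $x$ has such a zero bit, preceded by a (possibly empty) run of $j\geqslant 0$ trailing ones, then these $j+1$ low bits are the only positions that change, so $x\+(x+1)=2^{j+1}-1$; this proves (NS3) in this case. If $x=2^h-1$, then $x+1=2^h$ and $x\+(x+1)=2^{h+1}-1$, which is the exceptional clause of (NS4) and also finishes (NS3). In the non-exceptional case, $x$ still has at least one $1$-bit at a position $\geqslant j+1$, so $x\geqslant 2^{j+1}>2^{j+1}-1=x\+(x+1)$, giving the strict inequality in (NS4).

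The hard part will be (NS5). My plan is to use the equality refinement of (NS2): $a\+b=a+b$ exactly when $a$ and $b$ share no $1$-bit (so no carries occur). Applied with $b=y-x$, this makes $x\+(y-x)=y$ equivalent to the $1$-bits of $x$ being disjoint from those of $y-x$, which in turn holds for every $0\leqslant x\leqslant y$ iff every such $x$ has its $1$-bits among those of $y$. If $y=2^k-1$, then any $x\leqslant y$ only uses the bottom $k$ bit positions, all occupied by $y$, giving the forward direction. For the converse, if $y$ has a zero bit at some position $j$ strictly below its highest $1$-bit, I would use $x=2^j\leqslant y$ as a witness: subtracting $2^j$ from $y$ forces a borrow through position $j$, leaving bit $j$ of $y-x$ equal to $1$ and so sharing it with $x$, so $x\+(y-x)<y$.

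Finally, (NS6) reduces directly to (NS3): the hypothesis together with (NS1) yields $s_1+1=s_2\+\cdots\+s_\ell$, so $s_1\+s_2\+\cdots\+s_\ell=s_1\+(s_1+1)$, which is of the form $2^k-1$ by (NS3). For (NS7), iterating (NS2) gives $s_1\+\cdots\+s_\ell\leqslant s_1+\cdots+s_\ell<y$, so $y$ and $s_1\+\cdots\+s_\ell$ are distinct, and (NS1) then yields $y\+s_1\+\cdots\+s_\ell>0$.
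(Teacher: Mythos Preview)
Your argument is correct and proceeds along the same bit-level lines as the paper for all seven items. The one place you diverge is the converse of (NS5): the paper locates the rightmost $0$-bit of $y$ sitting just below a $1$, splits $y$ as $(x+1)+(z-1)$ into the obvious high and low pieces, and then argues that $x$ and $z$ must share a $1$-bit; your witness $x=2^j$ (for any $0$-bit position $j$ below the top bit of $y$) together with the borrow observation is simpler and equally valid.
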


\begin{proof}
In what follows we will use the notation $x=\ldots x_k x_{k-1} \ldots x_1 x_0$ for the binary expansion of $x= \sum_{i=0}^{\infty }  x_i 2^{i}$, where $x_i= 0$ or 1, with finitely many values of 1. If we want to put the emphasis on the fact that $x_i=1$ (or $0$), we will simply write $1_i$ (or $0_i$).
\begin{enumerate}
\item[(NS1):] If $x=y$, then $x$ and $y$ share the same unique binary expansion, that is,  $x_i =y_i$. As $x_i +x_i=0 \pmod 2$, we have that $x\+ y=0$.
If $x\+y=0$, then for each $i$, one has $x_i = y_i$, so $x=y$.
\item[(NS2):] If for every $i$, $x_i$ and $y_i$ are not both equal to $1$, then $x_i + y_i \pmod 2 = x_i+y_i$, so $x\+y=x+y$. If, on the contrary,  $x_i=y_i=1$ for some $i$,  then $x_i$ and $y_i$ cancel out in the Nim-sum, so  $x\+ y < x+y$.
\item[(NS3,4):] Let $h$ be the smallest index for which $x_h=0$. Then, $x=\ldots  x_{h+1} 0_h 1_{h-1} \ldots 1_1 1_0$, $x+1=\ldots  x_{h+1} 1_h 0_{h-1} \ldots 0_1 0_0$, and $x \+ (x+1) = 1_h 1_{h-1} \ldots 1_1 1_0 = 2^{h+1} - 1$,  which proves (NS3) (with $k=h+1$). Furthermore, if $x_j=0$ whenever $j>h$, then $x=2^h-1$ and $x\+(x+1)=x+(x+1)=2^{h+1}-1$. On the other hand, if $x_j=1$ for at least one $j>h$, then $x \geqslant 2^{h+1}$, and therefore, $x > x \+ (x+1) = 2^{h+1} - 1$, which proves (NS4).
\item[(NS5):] Suppose that $y=2^k-1$, so $y=1_{k-1} 1_{k-2}\ldots 1_1 1_0$.  For $0\leqslant  x\leqslant  y$ let $x_{k-1}x_{k-2}\ldots x_1 x_0$ be the binary expansion of $x$ and
$z_{k-1}z_{k-2}\ldots z_1 z_0$ be the binary expansion of $y-x$.  Then, for each $i=0,1,\ldots,k-1$, one has $x_i+z_i=1$, so $x\+(y-x)=x+(y-x)=y$.
Now suppose that $y \neq  2^k-1$ for any $k$. We will show that there is at least one pair  of integers $x, z$ such that $y=x+z$, but $x\+z \neq y$.
If $y \neq  2^k-1$ for any $k$, then the binary expansion of $y$ is not a string of consecutive ones, so there is at least one $0$ immediately to the right of a $1$. Let $h$ be the position of the rightmost such $0$, that is, $y=y_k y_{k-1} \ldots 1_{h+1} 0_h y_{h-1} \dots y_1 y_0$.
Define $x+1$ as the integer whose binary expansion is $y_k y_{k-1} \ldots 1_{h+1} 0_h 0_{h-1} \dots 0_1 0_0=2^{h+1}+\sum_{i=h+2}^{\infty }  y_i 2^{i}$, and $z-1$ as the integer whose binary expansion is given by $ 0_h y_{h-1} \dots y_1 y_0=\sum_{i=0}^{h-1 }  y_i 2^{i}$. Clearly, $(x+1)+(z-1)=x+z=y$
and the binary expansion of $x$ is given by $y_k y_{k-1} \ldots 0_{h+1} 1_h 1_{h-1} \dots 1_1 1_0$. As there are only $1$s in the $h+1$ rightmost digits of $x$, and because there is also at least one 1 in the (at most) $h+1$ digits of $z$, then at least one pair of $1$s will cancel out in the Nim-sum of $x$ and $z$, so $x\+z<x+z=y$,  completing the proof of (NS5). Here is a numerical  example that illustrates the proof. Let $y=25=11001_2$, so $h=2$ is the rightmost position of a $0$ following a $1$. Then $x+1=11000_2 = 24$ and $z-1=001_2=1$. That makes $x=23=10111_2$ and $z=010_2=2$  (using $h+1$ digits). The $1$s at position $h=1$ will cancel out in the Nim-sum, giving  $23\+2=21<25$.
\item[(NS6):] Let $y=s_2 \+ \cdots  \+ s_{\ell}$. Then $(s_1+1)  \+ s_2 \+ \cdots  \+ s_{\ell}=(s_1+1)\+y=0$ implies that $y = s_1+1$ by (NS1), and therefore, $s_1  \+ s_2 \+ \cdots  \+ s_{\ell}=s_1\+ y = s_1\+ (s_1+1)=2^k-1$ for some $k\geqslant 1$ by (NS3) and the proof is complete.
\item[(NS7):] Since $s_1\+s_2\+\cdots \+ s_k \leqslant   s_1+s_2+\cdots + s_k < y$, we have $y \+s_1\+s_2\+\cdots \+ s_k > 0$ by (NS1).
\end{enumerate}
\end{proof}
In the subsequent proofs, we will only use the ``only if" part of (NS5). An interesting corollary to (NS6) is currently not used in our proofs, but perhaps it has relevance to the solution of the general conjecture.

\begin{corollary} \label{Player1lm} $\A$ wins if her final move in {\sc building} play is to a position for which the Nim-sum of the stack sizes is not of the form $2^h-1$, for any positive integer $h$.
\end{corollary}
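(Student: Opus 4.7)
The plan is to argue directly by contrapositive using (NS6). Assume $\A$'s final building move leaves a position $s=(s_1,\ldots,s_\ell)$ whose Nim-sum $\sigma := s_1 \oplus s_2 \oplus \cdots \oplus s_\ell$ is not of the form $2^h-1$. Since $n$ is even, $\B$ must play one more building token; write the resulting position as $s'$, where $s'_i = s_i+1$ for some single index $i$ and $s'_j = s_j$ for $j\neq i$. The goal is to show that no matter which $i$ the player $\B$ chooses, the Nim-sum $\sigma' := s'_1 \oplus \cdots \oplus s'_\ell$ is nonzero. Then by Theorem~\ref{losingnim}, $s'$ is an $\N$-position of \textsc{nim}, and since $\A$ moves first in \textsc{nim}, she wins.

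The key step is to rule out $\sigma' = 0$ for every choice of $i$. Suppose, for contradiction, that $\B$ can pick $i$ so that $(s_i+1) \oplus \bigoplus_{j\neq i} s_j = 0$. Then, setting $y := \bigoplus_{j \neq i} s_j$, the hypothesis reads $(s_i+1) \oplus y = 0$, which is exactly the premise of (NS6) applied with stack $s_i$ playing the distinguished role. Fact (NS6) then yields
\[
\sigma \;=\; s_i \oplus y \;=\; s_i \oplus (s_i+1) \;=\; 2^k-1
\]
for some $k\geqslant 1$, contradicting the assumption that $\sigma$ is not of the form $2^h-1$. Hence $\sigma' > 0$ for every legal move $\B$ can make, as required.

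I expect no real obstacle here, because the work has already been done in Lemma~\ref{DSfact}: fact (NS6) was stated precisely so that incrementing one coordinate of $s$ by $1$ and obtaining Nim-sum zero forces $\sigma$ to be one less than a power of $2$. The only thing worth noting in the writeup is the parity bookkeeping—that $n$ even guarantees $\B$, not $\A$, makes the last building move, so that $\A$ does start \textsc{nim} play from the position $s'$—and a one-line appeal to Theorem~\ref{losingnim} to conclude that a nonzero Nim-sum means $\A$ wins the ensuing \textsc{nim} game.
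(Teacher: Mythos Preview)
Your proof is correct and follows essentially the same approach as the paper's: both argue, via (NS6), that if $\B$'s last {\sc building} move could reach Nim-sum $0$, then the position $\A$ handed him must have had Nim-sum $2^h-1$, whence the contrapositive gives the claim. The paper's version is terser, but the logical content is identical.
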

\begin{proof} Indeed, to win, $\B$ must finish with a Nim-sum of 0. Then, by (NS6), the position before his final move must have a Nim-sum of the form $2^h-1$, for some $h\geqslant 1$.
\end{proof}

Note that (NS6) is not true in the other direction, as for example, $2\+5=2^3-1$, but neither $3\+5$ nor $2\+6$ equals 0. Thus, Corollary~\ref{Player1lm} is also not an if and only if statement.

On the other hand, we will use (NS7) repeatedly in the proof for $\ell=5$ to conclude that $\A$ wins whenever she manages to build a stack that contains more than half of the tokens. Moreover, as we will see, if two stack heights are equal, then she wins if there is another stack with more than half of the tokens that are not in the matched stacks.


We are now ready to state the main results. We first give the result for who wins on three stacks, as well as a general result that P2 wins when the number of tokens is at most three more than the number of stacks.

\section{Main results}

\begin{theorem} \label{s=3}
 If $n$ is even, then $\A$  wins $\bn(n,3)$ if and only if $n \ne 2^k-2$ for any $k$.
\end{theorem}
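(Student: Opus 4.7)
The plan is to invoke the two strategies established in Lemma~\ref{strats}, and to verify that the resulting final position has the desired Nim-value. Since $n$ is even, $\B$ makes the last move of {\sc building}, so everything hinges on what $\B$'s final move can accomplish.

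For the ``if'' direction, assume $n=2^k-2$ and let $\B$ play Strategy II throughout. His last move leaves the position $(z,x,y)$ with $z=x+y$, and summing heights gives $z=n/2=2^{k-1}-1$. Since $z$ is of the form $2^j-1$, fact (NS5) yields $x\+y=x\+(z-x)=z$, so the final Nim-sum is $z\+x\+y=z\+z=0$. Thus $\A$ starts {\sc nim} at a $\P$-position and loses.

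For the ``only if'' direction, assume $n\ne 2^k-2$ for any $k$ and let $\A$ play Strategy I, so that after her last move the position is $(y,x,x)$ with $y\geqslant x$ and $y+2x=n-1$. Player $\B$'s final move yields either $(y+1,x,x)$, whose Nim-sum $y+1$ is nonzero, or a position whose Nim-sum equals $y\+x\+(x+1)$ (independent of any canonical reordering). By (NS3), $x\+(x+1)=2^{h+1}-1$, where $h$ is the least index with $x_h=0$, so this Nim-sum vanishes only if $y=2^{h+1}-1$.

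The crux is to verify that this coincidence cannot occur under the standing hypothesis. The definition of $h$ forces $x\equiv 2^h-1\pmod{2^{h+1}}$, and combining this with $x\leqslant y=2^{h+1}-1$ pins $x=2^h-1$. Substituting into $y+2x=n-1$ then gives $n=2^{h+2}-2$, contradicting our assumption on $n$. I expect this small piece of binary bookkeeping to be the main technical point; once it is in hand, $\B$'s final move cannot produce Nim-sum $0$, and $\A$ wins the ensuing {\sc nim} game.
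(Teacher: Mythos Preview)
Your proposal is correct and follows essentially the same approach as the paper: Strategy~II with (NS5) for the $n=2^k-2$ direction, and Strategy~I with an analysis of $y\+x\+(x+1)$ for the converse. The only cosmetic difference is in the binary bookkeeping: the paper invokes (NS4) to split on whether $x=2^j-1$ (if not, $x\+(x+1)<x\leqslant y$ immediately rules out a zero Nim-sum), whereas you go directly through (NS3) and the congruence $x\equiv 2^h-1\pmod{2^{h+1}}$ together with $x\leqslant y$ to pin $x=2^h-1$; both routes land on $n=2^{h+2}-2$.
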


\begin{proof}   If $\A$ follows Strategy I, then {\sc building} ends in either  $(y,x,x)$, or  $(y,x+1,x)$ with $y \geqslant x+1$. In the first case, $\A$ wins as $y\, \+\, x \,\+ \, x   = y \, \+ \,0 \, > 0$ (so this is not a move $\B$ should make). In the second case, we need to distinguish between $x \ne 2^k-1$ and $x = 2^k-1$. If  $x \ne 2^k-1$, then
$y \+ (x+1) \+ x\ne 0$, as $x \+ (x+1)<x$ by (NS4) and $x \leqslant y$ by assumption. On the other hand, if $x = 2^k-1$, then we have that
$$ y\+ (x+1)\+ x = y \+ (2^{k+1}-1) = 0   \Leftrightarrow
 n =2^{k+2}-2.$$
 It remains to be shown that $\B$ can force a win in the case where $n =2^{k}-2$ for some $k$, no matter which strategy $\A$ employs.   Let $n =2^{k}-2$.  If $\B$ follows Strategy II, then the building phase ends in $(x+y,x,y)$. Since $ (x+y)+x + y = n = 2^k-2$, we have that $ x+y = 2^{k-1}-1$, and hence by (NS5), that $x \+  y = x+y$. This implies that
$(x+y) \+ x \+ y  = (x+y) \+ (x+y)=0$, a win for $\B$.
\end{proof}

For more than three stacks, the winner depends on the interplay between $n$ and $\ell$, as opposed to depending on the specific value of $n$ only.

\begin{theorem}\label{s>3}
P2 wins $\bn(n,\ell)$ for odd $\ell >3$ and even $n \leqslant  \ell+3$.
\end{theorem}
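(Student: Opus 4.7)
The plan is to exhibit an explicit winning strategy for $\B$, built around what I will call the \emph{pairing strategy}: after every $\B$ move the multiset of positive stack heights can be partitioned into pairs of equal values (together with any empty stacks), so that the Nim-sum is $0$ by (NS1). Concretely, $\B$ responds to $\A$ as follows: if $\A$ plays on a non-empty stack of height $h$, then by the invariant there is another stack of the same height $h$ (its count was even and at least $2$), and $\B$ plays on that one; if $\A$ plays on an empty stack, $\B$ plays on a different empty stack, whenever one is available.

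Let $e$ denote the number of empty stacks. Under the pairing strategy $e$ decreases by exactly $2$ whenever $\A$ plays on an empty stack and stays constant otherwise, so at the start of round $k$ one has $e\geqslant \ell-2(k-1)$. For $n\leqslant \ell-1$ this gives $e\geqslant 3$ at the start of the last round $k=n/2$, so the pairing strategy is always executable and terminates in a position with Nim-sum $0$. For $n=\ell+3$ the pairing strategy still succeeds, with at most one forced deviation at round $(\ell+1)/2$ in the sub-case where $\A$ has always played on an empty stack: just before $\B$'s move in that round the position is $(\underbrace{1,\ldots,1}_{\ell})$, so $\B$ is forced to create $(2,\underbrace{1,\ldots,1}_{\ell-1})$ with Nim-sum $2$; however one round remains, and a short case analysis on $\A$'s two possible last moves shows $\B$ can always recover by playing to $(3,2,\underbrace{1,\ldots,1}_{\ell-2})$, whose Nim-sum is $3\oplus 2\oplus 1=0$ since $\ell-2$ is odd.

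The subtle case is $n=\ell+1$, where the same all-ones trap arises precisely at the beginning of $\B$'s last move, with no remaining rounds for recovery. To prevent it, $\B$ modifies the strategy at round $(\ell-1)/2$: if $\A$ plays on an empty stack yielding $(\underbrace{1,\ldots,1}_{\ell-2},0,0)$, then $\B$ plays on a height-$1$ stack instead of on the remaining empty, producing $(2,\underbrace{1,\ldots,1}_{\ell-3},0,0)$. The main obstacle is to verify that from this position each of $\A$'s three possible final moves (on the height-$2$ stack, on a height-$1$ stack, or on an empty stack) admits a $\B$ response with Nim-sum $0$. Each sub-case reduces to a small Nim-sum computation exploiting the parity of $\ell$; for instance, to counter $(3,\underbrace{1,\ldots,1}_{\ell-3},0,0)$, $\B$ plays $(3,2,\underbrace{1,\ldots,1}_{\ell-4},0,0)$, whose Nim-sum equals $3\oplus 2\oplus 1=0$ since $\ell-4$ is odd. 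Together with the base case $n=2$, where $\B$ simply plays on a second empty stack to reach $(1,1,0,\ldots,0)$, this completes the proof.
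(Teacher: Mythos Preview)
Your overall approach—pure mirroring with a small endgame adjustment—is the same as the paper's, and your treatment of $n\leqslant\ell-1$ and of $n=\ell+1$ is correct. However, your analysis of $n=\ell+3$ has a genuine gap.

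You claim that for $n=\ell+3$ the pairing strategy ``still succeeds, with at most one forced deviation at round $(\ell+1)/2$ in the sub-case where $\A$ has always played on an empty stack.'' This is not true. Suppose $\A$ plays on a non-empty stack exactly once during rounds $2,\ldots,(\ell+1)/2$ and on empty stacks in all other rounds. Under your mirroring rule, $\B$ mirrors that single non-empty move, and after round $(\ell+1)/2$ the position is $(2,2,\underbrace{1,\ldots,1}_{\ell-3},0)$ with one empty stack remaining. In the final round $(\ell+3)/2$, when $\A$ plays on that last empty stack, $\B$ cannot mirror: the forced deviation occurs in the \emph{last} round, not in round $(\ell+1)/2$, and you have not covered this line. (A concrete instance with $\ell=5$, $n=8$: $\A$ plays empty, then height~$1$, then empty, then empty; the final position before $\B$'s last move is $(2,2,1,1,1)$.)

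The paper avoids this oversight by classifying, at four moves remaining, the two paired configurations from which mirroring can still break down, namely $(2,2,1,\ldots,1,0,0,0;4)$ and $(1,\ldots,1,0;4)$, and handling each separately. Your missing case corresponds precisely to the first of these. The repair is easy and uses your own idea: from $(2,2,\underbrace{1,\ldots,1}_{\ell-2})$ after $\A$'s last move, $\B$ plays on a height-$2$ stack to reach $(3,2,\underbrace{1,\ldots,1}_{\ell-2})$, whose Nim-sum is $3\oplus2\oplus1=0$ since $\ell-2$ is odd. You should add this branch to make the $n=\ell+3$ case complete.
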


\begin{proof} We consider three cases, namely  $n \leqslant \ell-1$, $n = \ell+1$, and $n = \ell+3$. If $n \leqslant \ell-1$, then $\B$ can always mirror the move of $\A$ as there are more stacks than tokens. Pairs of stacks of equal height have a Nim-sum of zero, so the final position has Nim-sum zero in this case. If $n = \ell+1$ or $n = \ell+3$, then $\B$ plays the mirroring strategy but adjusts it as needed in the final two moves. To describe how the adjustment is made, we will describe a position as $(x_1, x_2, \ldots, x_\ell; r)$, where the first $\ell$ terms describe stack heights as before and the last term indicates the number of tokens (= number of moves) that remain to be played. Of course,  $r=n-x_1 - x_2 - \cdots - x_\ell$, but it will help for the clarity of the proof to emphasize the number of moves that remain.

Specifically, when $n=\ell+1$, the mirroring strategy does not work when $\A$ always starts a new stack, that is, if the position after the second to last move of $\A$ is $(1, \ldots,1, 1,1 ,1,0,0;3)$. $\B$ now adjusts his strategy and moves to position $(2, \ldots,1, 1,1, 1,0,0;2)$. Figure~\ref{n=s+1} shows  five stacks only (omitting the other pairs of matched stacks at height one) with the possible moves by $\A$ and the response by $\B$ to a position that has either matched stacks or a $1-2-3$ configuration, each resulting in a zero Nim-sum and a win for $\B$.

 \begin{figure}[htb]
 \begin{center}
 \psset{xunit=.8cm}
\psset{yunit=0.64cm}
\begin{pspicture}(14.4,5)
\multirput(4.8,3.9)(1.2, 0){5}{\es}
\multirput(4.8,4.2)(1.2,0){3}{\tok}
\multirput(4.8,4.5)(0,0.3){1}{\tok}

\psline{->}(7.7,3.3)(7.7,1.5)
\psline{->}(5,3.7)(3.5,3.3)
\psline{->}(10.4,3.7)(11.9,3.3)

\multirput(0,1.7 )(1.2, 0){5}{\es}
\multirput(0,2)(1.2,0){3}{\tok}
\multirput(0,2.3)(0,0.3){1}{\tok}
\multirput(3.6,2)(0,0.3){1}{\tokI}
\multirput(1.2,2.3)(0,0.3){1}{\tokII}

\multirput(4.8,0)(1.2, 0){5}{\es}
\multirput(4.8,0.3)(1.2,0){3}{\tok}
\multirput(4.8,0.6)(0,0.3){1}{\tok}
\multirput(8.4,0.3)(0,0.3){1}{\tokII}
\multirput(6,0.6)(0,0.3){1}{\tokI}

\multirput(9.6,1.8)(1.2, 0){5}{\es}
\multirput(9.6,2)(1.2,0){3}{\tok}
\multirput(9.6,2.3)(0,0.3){1}{\tok}
\multirput(9.6,2.6)(0,0.3){1}{\tokI}
\multirput(10.8,2.3)(0,0.3){1}{\tokII}

\end{pspicture}
\caption{Endgame when $n=\ell+1$.}\label{n=s+1}
\end{center}
\end{figure}
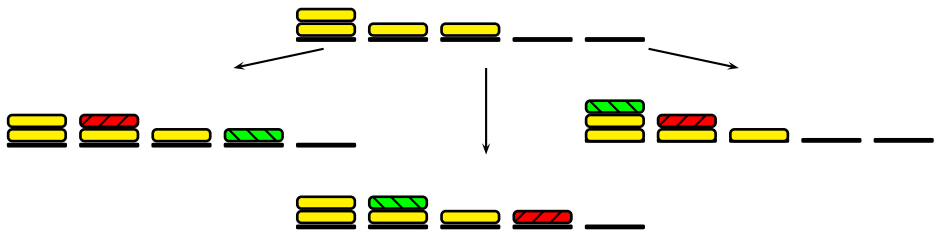

Next we look at the case $n=\ell+3$. Here, there are two positions where the mirroring strategy cannot be played until the end, namely $(2,2,1,1,\ldots, 1,0,0,0;4)$ or $(1,\ldots, 1,0;4)$. In the first case, the end game follows as in the case $n = \ell+1$ if $\A$ moves to  $(2,2,1,1,\ldots, 1,1,0,0;3)$, or by playing a mirroring strategy if $\A$ plays on a non-empty stack. In the second case, $\B$ adjusts his strategy as shown in Figure~\ref{n=s+3;a}  if $\A$ chooses to play on a non-empty stack in move $n-3$. Figure~\ref{n=s+3;b} shows the endgame if $\A$ plays on the empty stack in move $n-3$.

 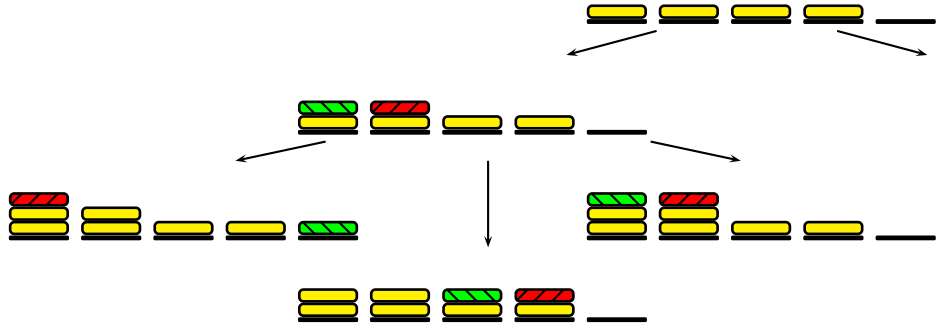
\begin{figure}[htb]
 \begin{center}
 \psset{xunit=.8cm}
\psset{yunit=0.64cm}
\begin{pspicture}(14.4, 7.5)

\multirput(9.6,6.2)(1.2, 0){5}{\es}
\multirput(9.6,6.5)(1.2,0){4}{\tok}

\psline{->}(13.5,6)(15,5.5)
\psline{->}(10.5,6)(9,5.5)

\multirput(4.8,3.9)(1.2, 0){5}{\es}
\multirput(4.8,4.2)(1.2,0){4}{\tok}
\multirput(4.8,4.5)(1.2,0){1}{\tokI}
\multirput(6,4.5)(1.2,0){1}{\tokII}

\psline{->}(7.7,3.3)(7.7,1.5)
\psline{->}(5,3.7)(3.5,3.3)
\psline{->}(10.4,3.7)(11.9,3.3)

\multirput(0,1.7 )(1.2, 0){5}{\es}
\multirput(0,2)(1.2,0){4}{\tok}
\multirput(0,2.3)(1.2,0){2}{\tok}
\multirput(4.8,2)(1.2,0){1}{\tokI}
\multirput(0,2.6)(1.2,0){1}{\tokII}

\multirput(4.8,0)(1.2, 0){5}{\es}
\multirput(4.8,0.3)(1.2,0){4}{\tok}
\multirput(4.8,0.6)(1.2,0){2}{\tok}
\multirput(7.2,0.6)(1.2,0){1}{\tokI}
\multirput(8.4,0.6)(1.2,0){1}{\tokII}

\multirput(9.6,1.7)(1.2, 0){5}{\es}
\multirput(9.6,2)(1.2,0){4}{\tok}
\multirput(9.6,2.3)(1.2,0){2}{\tok}
\multirput(9.6,2.6)(1.2,0){1}{\tokI}
\multirput(10.8,2.6)(1.2,0){1}{\tokII}

\end{pspicture}
\caption{Endgame when $n=\ell+3$ and $\A$ plays on a non-empty stack in move $n-3$.}\label{n=s+3;a}
\end{center}
\end{figure}

 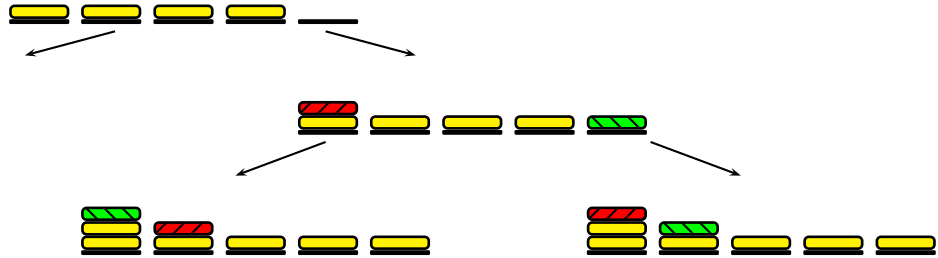
\begin{figure}[htb]
 \begin{center}
 \psset{xunit=.8cm}
\psset{yunit=0.64cm}
\begin{pspicture}(14.4, 7.5)

\multirput(0,6.2)(1.2, 0){5}{\es}
\multirput(0,6.5)(1.2,0){4}{\tok}

\psline{->}(1.5,6)(0,5.5)
\psline{->}(5,6)(6.5,5.5)

\multirput(4.8,3.9)(1.2, 0){5}{\es}
\multirput(4.8,4.2)(1.2,0){4}{\tok}
\multirput(4.8,4.5)(1.2,0){1}{\tokII}
\multirput(9.6,4.2)(1.2,0){1}{\tokI}

\psline{->}(5,3.7)(3.5,3)
\psline{->}(10.4,3.7)(11.9,3)

\multirput(1.2,1.4 )(1.2, 0){5}{\es}
\multirput(1.2,1.7)(1.2,0){5}{\tok}
\multirput(1.2,2)(1.2,0){1}{\tok}
\multirput(1.2,2.3)(1.2,0){1}{\tokI}
\multirput(2.4,2)(1.2,0){1}{\tokII}


\multirput(9.6,1.4)(1.2, 0){5}{\es}
\multirput(9.6,1.7)(1.2,0){5}{\tok}
\multirput(9.6,2)(1.2,0){1}{\tok}
\multirput(9.6,2.3)(1.2,0){1}{\tokII}
\multirput(10.8,2.0)(1.2,0){1}{\tokI}

\end{pspicture}
\caption{Endgame when $n=\ell+3$ and $\A$ plays on the empty stack in move $n-3$.}\label{n=s+3;b}
\end{center}
\end{figure}

Once more the final positions consists of either matched stacks or a $1-2-3$ configuration.
\end{proof}

It may seem as if $\B$ might be able to adjust his strategy earlier and earlier and have a winning strategy also for larger values of $n$. However, one can check (by hand) that $\A$ has a winning strategy for $\bn(10,5)$ (see also Lemma~\ref{SpecialCases}) and some  other cases. Computer explorations lead to the following conjecture:

\begin{conjecture}\label{p1winsoften}
$\A$ wins $\bn(2n,\ell)$ if $2n > \ell + 3$.
\end{conjecture}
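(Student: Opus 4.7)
The plan is to exhibit an explicit winning strategy for $\A$ in $\bn(2n,\ell)$ whenever $\ell$ is odd and $2n > \ell + 3$. The guiding principle, as the authors hint in the paragraph immediately preceding the conjecture, is that $\A$ should aim for a final configuration in which a single stack contains strictly more than $n$ tokens; by Lemma~\ref{DSfact}(NS7) such a stack guarantees a nonzero final Nim-sum and hence a win. Theorem~\ref{BT} (the case $\ell=5$) is the prototype, and the task is to show that the same ``dominant stack'' phenomenon can be engineered for every odd $\ell \geqslant 5$.

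I propose a \emph{target-and-switch} strategy for $\A$, run in two phases. In the opening phase, $\A$ repeatedly plays on a distinguished stack (say stack $1$), threatening to accumulate more than $n$ tokens there. If $\B$ ignores the threat and never plays on stack $1$, then at some carefully chosen turn $\A$ switches to finishing moves on the other $\ell-1$ stacks; since $\ell-1$ is even, the danger is that $\B$ pairs those stacks into equal heights (which would cancel out the Nim-sum), so $\A$ must spend her remaining moves precisely to spoil every such pairing. If instead $\B$ does contest stack $1$, $\A$ opens a second target stack and iterates; the hypothesis $2n \geqslant \ell + 5$ provides just enough surplus tokens, relative to the number of stacks, to force a strict imbalance in one of the attempts. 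The extended remark about ``matched stacks plus another dominant stack'' preceding the main results section suggests that the necessary bookkeeping can indeed be carried out.

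The principal obstacle is the richness of $\B$'s counterplay. Because $\ell - 1$ is even, $\B$ has access to mirror-like pairing strategies on the non-target stacks, reminiscent of the trivial case when both the number of tokens and the number of stacks are even. Moreover, the endgame requires delicate binary control: by Lemma~\ref{DSfact}(NS6), after $\A$'s final move the Nim-sum must differ from $s_i \oplus (s_i+1) = 2^{k_i}-1$ for every stack $i$ that $\B$ could increment---the same ``$2^k-1$ trap'' responsible for the exceptional totals $n = 2^k-2$ in Theorem~\ref{s=3}. I expect the bulk of the technical work will be a binary-arithmetic endgame analysis showing that $\A$ can always steer clear of these traps when $2n \geqslant \ell + 5$, likely organized as an induction on $2n - \ell$ with Theorems~\ref{s=3}, \ref{s>3}, and~\ref{BT} as prototypes for the base cases; the most delicate subcase ought to be when $2n$ itself has the form $2^k-2$, where one must verify that the obstruction encountered at $\ell=3$ evaporates as soon as $\ell$ is larger.
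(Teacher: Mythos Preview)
The statement you are attempting to prove is a \emph{conjecture} in the paper, not a theorem; the paper gives no proof of it. What the paper does prove is the special case $\ell=5$ (Theorem~\ref{BT}), and in the Discussion section the authors explicitly say they lack a general argument for the initial cases needed to push the same method to larger~$\ell$. So there is no ``paper's own proof'' to compare against, and your proposal should be read as an attempted attack on an open problem.

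As an attack, your sketch has a concrete gap at the very first step. If $\A$ simply piles tokens on stack~$1$, the correct reply for $\B$ is to pile all of his tokens on stack~$2$: after all $2n$ building moves the position is $(n,n,0,\ldots,0)$ with Nim-sum~$0$, and $\B$ wins. Your ``switch to a second target'' clause does not repair this, because as soon as $\A$ abandons stack~$1$, $\B$ can abandon stack~$2$ and mirror on the remaining $\ell-3$ (even) stacks; you never specify a switching rule that prevents $\B$ from keeping the two tall stacks matched while pairing everything else. The assertion that ``the hypothesis $2n\geqslant\ell+5$ provides just enough surplus tokens'' is exactly the point in dispute and is not argued.

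It may help to look at how the paper handles $\ell=5$, since that machinery is quite different from what you outline. The induction there is on $k(n)=\lceil\log_2 n\rceil$, not on $2n-\ell$; the engine is Lemma~\ref{DS8}, which says that once two stacks are matched at a power-of-two height~$\pi$, $\A$ can replay a winning strategy for the smaller game $\bn(2\delta,5)$ ``on top'' of those stacks. This reduces the problem to (i) a separate low-play argument for $2n=2^k-2$ (Lemma~\ref{2^k-2}) and (ii) a finite list of base cases $n=5,\ldots,12$ that the authors verify by computer (Lemma~\ref{SpecialCases}). The obstacle to generalizing to arbitrary odd~$\ell$ is precisely that the number of such base cases grows with~$\ell$ and no uniform hand proof is known. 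Your plan does not engage with either of these ingredients, and in particular an induction on $2n-\ell$ would require as base case exactly the strip $2n=\ell+5$ across all odd~$\ell$, which is not covered by any result in the paper.
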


The proof for five stacks is more involved than that for three stacks, and it uses a number of ideas. Before we get into the technical details, we will state the result and discuss the main ideas. Theorem~\ref{BT} shows that Conjecture~\ref{p1winsoften} is true for $\ell =5$. It will be convenient to use $2n$ as the total number of tokens, that is,  the players play $n$ tokens each in the {\sc building} stage.

\begin{theorem}\label{BT}
$\A$ wins $\bn(2n, 5)$ if and only if $n \geqslant 5$.
\end{theorem}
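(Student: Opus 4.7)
The ``only if'' direction is immediate from Theorem~\ref{s>3}: when $n \leqslant 4$, we have $2n \leqslant 8 = \ell+3$, so $\B$ wins $\bn(2n,5)$.

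For the ``if'' direction, we must exhibit a winning strategy for $\A$ whenever $n \geqslant 5$. Since $\B$ makes the last {\sc building} move, $\A$'s goal is to force the final five-tuple of stack heights to have nonzero Nim-sum. The main tool is (NS7), which supplies two convenient end-configurations to aim for:
\emph{(W1)} one stack ends with strictly more than $n$ tokens, so that it exceeds the sum of the other four and (NS7) forces a nonzero Nim-sum; and
\emph{(W2)} two stacks end with the same height $a$ while some third stack contains strictly more than $n-a$ tokens, so that the matched pair cancels and (NS7) applied to the three remaining stacks again gives a nonzero Nim-sum.

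The plan is to design $\A$'s strategy so that at least one of (W1) and (W2) is always reachable. She opens on stack~$1$. If $\B$ ever replies on stack~$1$, then $\A$ continues exclusively on stack~$1$ thereafter; that stack ends with at least $n+1$ tokens, producing (W1). If $\B$ refuses to ever touch stack~$1$, he must distribute all $n$ of his tokens over stacks~$2$--$5$, and $\A$ pivots (after a suitable number of opening moves on stack~$1$) onto a second stack in order to set up (W2), using stack~$1$ together with one of $\B$'s stacks as the eventual matched pair and $\A$'s pivoted stack as the dominant third. The genuinely delicate situation is when $\B$ plays a ``mixed'' strategy, mirroring on stack~$2$ for a while and then switching, in an attempt to spoil both threats at once; here $\A$'s switching rule must be chosen so that after every response by $\B$, at least one of (W1), (W2) is still within reach.

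The proof then becomes a case analysis on $\B$'s replies at each stage. The boundary case $n=5$ is the tightest---$\A$ has only five tokens to distribute and the margin in (NS7) is just barely attainable---and I would verify it directly by inspecting the (finite) game tree, which is precisely the role of the referenced Lemma~\ref{SpecialCases}. For $n \geqslant 6$ the extra slack permits a uniform argument, either by carrying the dual-threat strategy through explicitly or by an inductive reduction to $\bn(2n-2,5)$: after $\A$'s first move on stack~$1$ and $\B$'s reply, the residual position is essentially a smaller instance in which $\A$ retains her winning opening. The main obstacle is exactly this bookkeeping---showing that $\B$ cannot simultaneously frustrate (W1) and (W2)---and this is where the strict inequality in (NS7) provides the necessary slack.
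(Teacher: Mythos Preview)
Your ``only if'' direction and the (W1) observation are correct, and the paper does use exactly this ``play high'' threat as one of $\A$'s tools. But the remainder of the proposal has genuine gaps.

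First, the inductive reduction to $\bn(2n-2,5)$ is wrong. After $\A$'s opening token on stack~1 and $\B$'s reply on (say) stack~2, the residual game is \emph{not} ``essentially a smaller instance'': a winning strategy for $\bn(2n-2,5)$ guarantees nonzero Nim-sum for configurations $(x_1,\dots,x_5)$ summing to $2n-2$, whereas you need nonzero Nim-sum for $(x_1+1,\,x_2+1,\,x_3,\,x_4,\,x_5)$, which is a different condition. The paper's Lemma~\ref{DS8} is precisely the correct transfer principle, and it says a winning strategy for $\bn(2\delta,5)$ can be played on top of two matched stacks of height $\pi$ \emph{only when $\pi$ is a power of~$2$ and $2\delta<4\pi$}. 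With $\pi=1$ this forces $\delta\le 1$, which is useless. Accordingly the paper does not induct on $n$ but on $k(n)$, choosing $\pi=p(n)=2^{k(n)-1}$ (or $p(n)/2$ in a subcase) so that $\delta=n-\pi$ both satisfies the $4\pi$ bound and falls in the range covered by the inductive hypothesis; this is also why the base cases in Lemma~\ref{SpecialCases} run through $n=12$, not just $n=5$.

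Second, your (W2) plan does not close. If $\A$ places $k$ tokens on stack~1 and her remaining $n-k$ on a pivot stack, that pivot stack finishes with exactly $n-k$ tokens, never \emph{strictly more} than $n-k$ unless $\B$ contributes to it, which he will not. And matching stack~1 to ``one of $\B$'s stacks'' asks $\A$ to hit a moving target that $\B$ controls. The paper's argument in the analogous situation (Case~1(iii)) instead tracks two explicit inequalities round by round---whether $\B$ still has enough tokens to raise stack~2 to height $\pi$, and whether $\A$ has kept $s_3(\xi)>\xi$---and proves that at every step at least one holds, so $\A$ can commit to whichever threat remains live. That bookkeeping is the heart of the proof and you have not supplied it.

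Third, the values $n=2^{k}-1$ are genuinely exceptional: there $\delta=\pi-1$ exactly, and the strict inequality $\delta<\pi-1$ needed in the tracking argument fails. The paper handles these by a completely separate ``play low'' argument (Lemma~\ref{2^k-2}) that forces an odd number of $(k{-}3)$- or $(k{-}2)$-components at the end of {\sc building}. Nothing in your sketch covers this case.
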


The strategies of how $\A$ wins obviously vary depending on $\B$'s defense attempts, but parts of her ideas are independent of his responses. Item (NS7) of Lemma~\ref{DSfact} indicates that $\A$ wins whenever she manages to build a stack that contains more than half of the tokens. Moreover, if some stack heights are equal, then she wins if there is a stack with a height that is more than half of the tokens that are not in the {\it matched} stacks. So one of the general strategies for $\A$ will be to {\it play high}. This {\it height strategy} consists of playing on the tallest stack (possibly disregarding a pair of matched stacks). Sometimes the height strategy is not appropriate. In such situations, $\A$ wants to avoid helping $\B$ match up a tall stack, typically one with a height that is a  power of two, and therefore {\it plays low}. The {\it low strategy} consists of always playing on the minimal stack. Note that Strategy~I played on three stacks is a combination of the high and low strategies, selected in response to the various moves by $\B$. A nontrivial variation of this will be true also in the case of five stacks. 
At the core of the proof of Theorem~\ref{BT} is the idea that $\A$ can win by playing high, playing low, or by using the winning strategy from a game with fewer tokens for a game with more tokens, thus allowing us to do an inductive proof. We let the computer verify that $\A$ can win $\bn(2n, 5)$ for several small $n \geqslant 5$, and then proceed to prove that $\A$ can win all games for larger values of $n$.

Powers of 2 will play a pivotal role for the players' {\sc building} strategies. Hence we introduce the following terminology. Let $\pi$ be a given power of 2 strictly smaller than the number $n$ of tokens of each player. A game is strategically played in two {\sc building} phases:
\begin{itemize}
\item the $\pi$-phase: both players play their first $\pi<n$ tokens
\item the $\delta$-phase: both players play their remaining $\delta=\delta(n,\pi)=n-\pi>0$ tokens.
\end{itemize}

A special case is when the $\pi$-phase results in two matched stacks, and this is the instance where $\A$ wants to play a winning strategy for the $2\delta$ remaining tokens ``on top" of these two stacks if such a strategy exists. Figure~\ref{winontop} illustrates this idea. 

\begin{figure}[htb]
\begin{center}
\psset{xunit=1cm}
\psset{yunit=0.8cm}
\begin{pspicture}(13,3)
\multirput(0,-0.2 )(1.2, 0){5}{\es}
\multirput(0,0)(0,0.3){4}{\tok}
\multirput(1.2,0)(0,0.3){2}{\tok}
\multirput(2.4,0)(0,0.3){2}{\tok}
\multirput(3.6,0)(0,0.3){1}{\tok}
\multirput(4.8,0)(0,0.3){1}{\tok}
\psline[linewidth=1.4pt]{->}(6,1.0)(6.8,1.0)
\psline{|-|}(7.5,-0.2)(7.5,1)
\rput(7.2,.35){$2^{k}$}
\multirput(8,-0.2 )(1.2, 0){5}{\es}
\multirput(8,0)(0,0.3){4}{\tokI}
\multirput(9.2,0)(0,0.3){4}{\tokII}
\multirput(8,1.2)(0,0.3){4}{\tok}
\multirput(9.2,1.2)(0,0.3){2}{\tok}
\multirput(10.4,0)(0,0.3){2}{\tok}
\multirput(11.6,0)(0,0.3){1}{\tok}
\multirput(12.8,0)(0,0.3){1}{\tok}
\end{pspicture}
\end{center}
\caption{Reusing a winning strategy: A winning strategy for $2\delta$ is played ``on top" of two stacks of height $2^k$. }\label{winontop}
\end{figure}
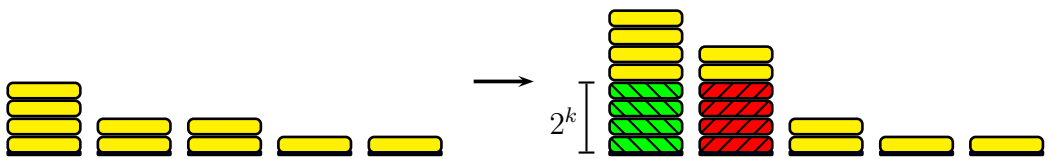

Consider an odd integer $\ell \geqslant 3$ and a positive integer $n$. Let $\delta+\pi=n$, where $\pi$ is a power of 2 and where $0<\delta < 2\pi$. The following lemma shows that if $\A$ wins $\bn(2\delta, \ell)$ then $\A$ wins $\bn(2n, \ell)$, if the players have built up two matching stacks in the $\pi$-phase.

\begin{lemma}\label{DS8}
Let $\pi$ be a power of 2 and let $\ell \geqslant 3$ be odd. Further, let $x_1,x_2,\ldots , x_\ell$ be integers with non-zero Nim-sum, but  $(x_1+ \pi) \+ (x_2+\pi)\+ x_3 \+ \cdots  \+ x_\ell =0$. Then
$$x_1+x_2+\cdots +  x_\ell \geqslant 4\pi.$$
\end{lemma}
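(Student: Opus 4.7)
The plan is to exploit the distinguished power $\pi=2^k$ to split each $x_i$ by bit position, and to track carefully how the Nim-sum changes when $\pi$ is added to $x_1$ and $x_2$. For each $i$, write $x_i=2\pi\, h_i+\pi\, b_i+l_i$, where $h_i\geqslant 0$, $b_i\in\{0,1\}$, and $0\leqslant l_i<\pi$ correspond to the bits of $x_i$ above, at, and below position $k$. Because Nim-sum is bitwise, this decomposition cleanly splits the total as
$$x_1\+ x_2 \+\cdots\+ x_\ell \;=\; 2\pi\, H+\pi\, B+L,$$
where $H, B, L$ are the XORs of the $h_i$, the $b_i$, and the $l_i$, respectively.

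Next I would analyze the effect of replacing $x_i$ by $x_i+\pi$ for $i=1,2$. When $b_i=0$, adding $\pi$ flips the middle bit with no carry; when $b_i=1$, the carry propagates into the upper block, replacing $h_i$ by $h_i+1$ (so its contribution to the upper XOR changes by $h_i\+(h_i+1)=2^{m_i}-1$, via (NS3)), while the middle bit itself becomes $0$. In either case the new middle bit equals $1-b_i$, so $b_1'\+ b_2'=b_1\+ b_2$: the middle-bit XOR is preserved, $B'=B$, and so is the low-bit XOR, $L'=L$. Only the upper XOR changes, by a controlled amount of the form $2^{m_1}-1$ (or its XOR with $2^{m_2}-1$ in case $b_1=b_2=1$).

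From the two hypotheses I would extract structural constraints. The vanishing of the shifted Nim-sum yields $L=0$, $B=0$, and $H'=0$; the nonvanishing of the original Nim-sum, using $B=L=0$, reduces to $H\ne 0$. A short case analysis on $(b_1,b_2)$ then rules out $(0,0)$, since in that case $H'=H$ would force $H=0$, contradicting $H\ne 0$. Hence at least one of $b_1,b_2$ equals $1$, and combined with the parity constraint $B=0$ (even number of $b_i$'s equal to $1$) this forces $\sum b_i\geqslant 2$, so $\pi\sum b_i\geqslant 2\pi$. Separately, $H\ne 0$ implies that some $h_i\ne 0$, hence $\sum h_i\geqslant 1$ and $2\pi\sum h_i\geqslant 2\pi$. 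Adding these contributions and the trivial $\sum l_i\geqslant 0$ gives
$$\sum_{i=1}^{\ell}x_i \;=\; 2\pi\sum h_i+\pi\sum b_i+\sum l_i \;\geqslant\; 2\pi+2\pi \;=\; 4\pi,$$
as required.

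The main technical care should go into the second step, the carry bookkeeping: articulating cleanly why the middle-bit XOR is invariant under the twin additions of $\pi$, while the upper-bit XOR shifts in the controlled way dictated by (NS3), and making sure the case $(b_1,b_2)=(0,0)$ is genuinely killed by the nonzero-original hypothesis. Once this is in place, the two independent lower bounds $\sum h_i\geqslant 1$ and $\sum b_i\geqslant 2$ each contribute $2\pi$, and the resulting estimate is in fact tight, for example when $x_1=\pi$, $x_2=2\pi$, $x_3=\pi$, $x_4=x_5=0$, where both sides equal $4\pi$.
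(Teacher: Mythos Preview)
Your proof is correct, and it takes a genuinely different route from the paper's. The paper reduces first to $\pi=1$ (observing that bits below position $k$ are unaffected by adding $\pi$, hence must already XOR to zero) and then to $\ell=3$ (replacing $x_3,\ldots,x_\ell$ by their XOR, which only decreases the sum), after which the statement becomes a tiny finite check: if $(x_1+1)\+(x_2+1)\+x_3=0$ with $x_1\+x_2\+x_3\ne 0$ and $x_1+x_2+x_3<4$, then two of the three shifted terms coincide and the third is zero, which is quickly seen to be impossible. Your argument instead keeps all $\ell$ stacks and the full power $\pi$, decomposing each $x_i$ into high/middle/low blocks and tracking the XOR at each level; the hypotheses force $L=B=0$ and $H\ne 0$, the case $(b_1,b_2)=(0,0)$ is excluded because it would give $H'=H$, and then the two independent contributions $2\pi\sum h_i\geqslant 2\pi$ and $\pi\sum b_i\geqslant 2\pi$ add to $4\pi$. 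This is more structural and makes transparent \emph{why} the bound is $4\pi$ (two separate $2\pi$ contributions from distinct bit levels), and it delivers the tightness example immediately. The paper's reduction is shorter to write but trades away that insight for a small case analysis. One minor remark: your invocation of (NS3) and the quantities $2^{m_i}-1$ is decorative---the argument never uses the precise form of $H'$ in the cases $(b_1,b_2)\ne(0,0)$, only that $(0,0)$ is impossible---so you could streamline by dropping that clause.
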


\begin{proof}
Since the Nim-sum of the $x_i$ is non-zero, and the addition of $\pi=2^k$ cannot affect the Nim-sum of the coefficients of $2^r$ with $r < k$, these coefficients already must have a Nim-sum of zero. Therefore, we can disregard those coefficients in the argument, which amounts to proving the result for $k=0$. Furthermore, without loss of generality one can assume that there are only three stacks, as the stacks $x_3$ to $x_\ell$ can be replaced by a stack of height $x_3'=x_3 \+ x_4 \+ \cdots  \+ x_\ell$, using that $x_3 + x_4 + \cdots  + x_\ell \geqslant x_3'$.
 So it suffices to prove the following simpler fact:\\
If
\begin{align}x_1 \+ x_2 \+ x_3 >0&\label{(1)}\\
(x_1+1) \+ (x_2+1 )\+ x_3 =0 &\label{(2)}
\end{align}
then $x_1+ x_2 +x_3 \geqslant 4$.

Suppose that $x_1+x_2+x_3<4$. The smallest configuration for which a Nim-sum of three pairwise distinct numbers is 0 is $(1,2,3)$. Hence two of the terms in (\ref{(2)}) must be equal and the third must equal $0$. Notice that $x_1=x_2$ is impossible if both (\ref{(1)}) and (\ref{(2)}) are satisfied. Also $x_i+1=x_3$ forces $x_{3-i}=-1$ for $i=1,2$ which is impossible.
\end{proof}

Note that this result does not depend on the particular odd number of stacks. But it does depend on having a winning strategy for smaller games. By Theorem~\ref{s>3}, $\B$ wins for the smallest values of $n$. Therefore we will depend on knowledge of specific `initial' cases for which $\A$ wins. We can prove manually that $\A$ wins for the necessary initial cases for five stacks, but we have not yet found any general strategy. Moreover,  the next lemma probably generalizes to more than five odd stacks, but we do not yet have a general proof. Lemma~\ref{2^k-2} considers the cases $2n=2^k-2$ for $k> 4$ (games for which $\A$ loses when playing on  three stacks). In what follows, we will say that a stack has a {\it $k$-component} when the coefficient of $2^k$ in that stack height is not zero. Clearly, whenever {\sc nim} begins with an odd number of $k$-components, for some $k$, then $\A$ wins.

\begin{lemma}\label{2^k-2}
For $k>4$,  $\A$ wins $\bn(2^k-2,5)$.
\end{lemma}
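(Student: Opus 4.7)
I plan to prove the lemma by strong induction on $k\ge 5$, using Lemma~\ref{DS8} as the main reduction. Set $\pi := 2^{k-2}$ and $\delta := 2^{k-2}-1$, so that each player holds $n = \pi+\delta = 2^{k-1}-1$ tokens, $\delta < 2\pi$, and the companion subgame is $\bn(2\delta,5) = \bn(2^{k-1}-2, 5)$. For the base case $k=5$, the subgame is $\bn(14,5)$, which has $n=7\ge 5$ and is one of the small computer-verified instances supporting Theorem~\ref{BT}. For the inductive step, $\bn(2^{k-1}-2,5)$ is a $\A$-win by the inductive hypothesis, yielding a winning strategy $\sigma$ for the subgame.

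The overall idea is to have $\A$ run a two-stage strategy: attempt to reach the matched configuration $(\pi,\pi,0,0,0)$ during the $\pi$-phase, then play $\sigma$ ``on top'' of the matched pair during the $\delta$-phase. Lemma~\ref{DS8} together with $2\delta<4\pi=2^k$ then forbids a final Nim-sum of $0$, so $\A$ wins. To execute the plan, $\A$ designates stacks~$1$ and~$2$ as her target matched pair and plays on stack~$1$ at every turn of the $\pi$-phase; I then split on $\B$'s response.

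\textbf{Case (a):} $\B$ plays all $\pi$ of his $\pi$-phase tokens on stack~$2$. Then $(\pi,\pi,0,0,0)$ is attained and the two-stage plan succeeds as described. \textbf{Case (b):} $\B$ places at least one token on stack~$1$ during the $\pi$-phase. $\A$ then switches to the pure height strategy for the rest of the game, placing all $\delta$ of her remaining tokens on stack~$1$; the final height of stack~$1$ is at least $\pi+\delta+1 = n+1$, which strictly exceeds the sum $n-1$ of the remaining four stack heights, so (NS7) yields a nonzero Nim-sum. \textbf{Case (c):} $\B$ avoids stack~$1$ but spreads his $\pi$ tokens across stacks~$2$--$5$ without concentrating on stack~$2$. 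Here neither (a) nor (b) applies, and $\A$ must switch to an adaptive low strategy in the $\delta$-phase. The key observation is that because $2^k-2=\sum_{i=1}^{k-1}2^i$, an elementary induction on bit position (using $(2^k-2)\bmod 2^{i+2}$ at each step) shows that every P-position on five stacks with total $2^k-2$ has exactly two $i$-components for each $i\in\{0,\dots,k-2\}$; so any deviation $\A$ can force away from this rigid parity profile wins the game.

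The main obstacle is case~(c): neither the matched-pair reduction nor the height shortcut of (NS7) applies, so $\A$ must exhibit a concrete low strategy that breaks the ``exactly two $i$-components for each $i$'' parity profile against every possible $\B$ defense. I expect this to require a further sub-division according to the multiset $\{p_2,p_3,p_4,p_5\}$ of $\B$'s $\pi$-phase contributions to stacks~$2$--$5$ — in particular according to how many of these coordinates vanish and which happen to be powers of~$2$. Constructing this low strategy and verifying that it defeats every defense by $\B$ uniformly in $k$ is the technical heart of the proof, and will likely use mirroring arguments in the spirit of Lemma~\ref{strats}.
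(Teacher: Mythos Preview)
Your cases~(a) and~(b) are fine, but case~(c) is not merely ``the technical heart to be filled in later'': it is exactly the obstruction that makes the paper isolate $n=2^{k-1}-1$ as a separate lemma in the first place. The scheme you set up---have $\A$ play high for $\pi=2^{k-2}$ moves and then branch on $\B$'s response---is the inductive engine of Theorem~\ref{BT}, and for this particular $n$ it yields $\delta=\pi-1$. In the paper's Case~1(iii) the \emph{strict} inequality $\delta<\pi-1$ (display~\eqref{<delta}) is what makes the right side of~\eqref{Eq4} positive and allows $\A$ to maintain $s_3(\xi)>\xi$ through the one delicate sub-case $s_2(\xi)=s_3(\xi)$; with $\delta=\pi-1$ that margin collapses to zero. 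Concretely, if in your case~(c) $\B$ ended the $\pi$-phase with $\pi-1$ tokens on stack~2 and one on stack~3, he matches the $(k-2)$-component on his very first $\delta$-phase move and still has $\pi-2$ tokens free to neutralise $\A$'s third-stack build-up. Your parity observation---that every five-stack $\P$-position with total $2^k-2$ has exactly two $i$-components for each $i\leqslant k-2$---is correct (write $\sum_i c_i 2^i=2^k-2$ with $c_i\in\{0,2,4\}$ and reduce modulo~$2$ repeatedly), but it does not by itself hand you a strategy for $\A$.

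The paper's proof sidesteps this trap by abandoning the high/match framework entirely: $\A$ plays \emph{low} from her very first move, with no induction and no appeal to Lemma~\ref{DS8}. The analysis then splits on how many of $\B$'s tokens land in the bottom $2^{k-3}$ layers. If he contributes at least $2^{k-3}+1$ such tokens, those layers fill completely, and a short count on the $3\cdot 2^{k-3}-2$ tokens above forces an odd number of $(k-2)$- or $(k-3)$-components. If he contributes only $2^{k-3}$ or $2^{k-3}-1$ (the minimum possible against low play), he has been piling onto stack~1, and an explicit endgame---with at most five moves left per player, independent of $k$---shows he cannot match the resulting $(k-2)$-component. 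Your proposed route would in effect require redoing this low-strategy analysis from a messier mid-game position; playing low from the start is both simpler and what actually works.
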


\begin{proof}
We show that by playing low,  $\A$ can force an odd number of $k$-components, for some $k$, when {\sc nim}-play starts. For ease of describing the argument, we say that a token belongs to the {\it bottom layers} if the number of tokens below it is strictly smaller than $2^{k-3}$ (see Figure~\ref{bottom}).  Note that each player has $2^{k-1}-1$ tokens to play.

Let us first assume that $\B$ contributes at least $2^{k-3}+1$ tokens to the bottom layers. In this case, $\A$ will be able to ensure that  the bottom layers are completely filled by playing low.
No matter how many tokens $\B$ contributes beyond the $2^{k-3}+1$ tokens, there are a total of  $2n-5\cdot 2^{k-3}=2^k-2-5\cdot 2^{k-3}=3\cdot 2^{k-3}-2$ tokens in the upper layers.  How can they be distributed?   There are three distinct possibilities how the total configuration of {\sc building} can end:
\begin{itemize}
\item If  $s_1\geqslant s_2 \geqslant 2^{k-2}$, then there are three $(k-3)$-components, since $s_3<2^{k-2}$;
\item If  $s_1 \geqslant 2^{k-2}$ and $s_2 < 2^{k-2}$, then there is an unmatched $(k-2)$-component;
\item If  $s_1< 2^{k-2}$ then there are five $(k-3)$-components.
\end{itemize}
In all instances, {\sc building} play ends in a nonzero Nim-sum.

Suppose next that  $\B$ contributes at most $2^{k-3}$ tokens to the bottom layers.
Since $\A$ plays low, $\B$ has to contribute at least $2^{k-3}-1$ tokens to the bottom layers, so there are just two different possibilities. Let's first consider the case when  $\B$ contributes exactly this minimum number of tokens. In this case,  it will take until $\A$ has made $4(2^{k-3}-1)+2= 2^{k-1}-2$ moves (see Figure~\ref{bottom}) before $\B$ can play anywhere but on the first stack. At this point, $\B$ will have made one fewer move, so $\B$ has two moves left, and $\A$ has one move left.
 \begin{center}
 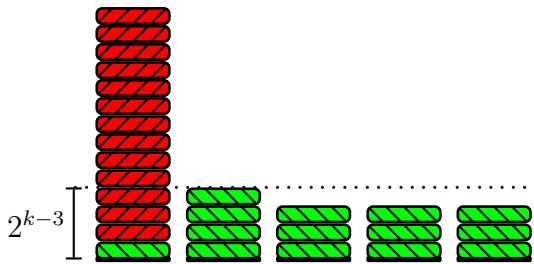
\begin{figure}[h]
  \psset{xunit=1cm}
\psset{yunit=0.8cm}
\begin{pspicture}(6, 4.3)
\psline{|-|}(-0.5,-0.2)(-0.5,1)
\rput(-1,.4){$2^{k-3}$}
\multirput(0,-0.2 )(1.2, 0){5}{\es}
\multirput(0,0)(0,0.3){1}{\tokI}
\multirput(0,0.3)(0,0.3){13}{\tokII}
\multirput(1.2,0)(0,0.3){4}{\tokI}
\multirput(2.4,0)(0,0.3){3}{\tokI}
\multirput(3.6,0)(0,0.3){3}{\tokI}
\multirput(4.8,0)(0,0.3){3}{\tokI}
\psline[linestyle=dotted, linewidth=1.2pt](-0.5,1.0)(5.5,1.0)
\end{pspicture}
 \caption{$\A$ plays the bottom layers whereas $\B$ plays on stack 1 when $k=5$.}\label{bottom}
\end{figure}
\end{center}
With  the initial token from $\A$, the first stack's height is $s_1=4 \cdot 2^{k-3}-2=2^{k-1}-2$, as shown in Figure~\ref{bottom} for $k=5$. Because $k>4$ by assumption,   $s_1$ contains a $(k-2)$-component. To match this $(k-2)$-component on stack ~2, a total of $2^{k-2}-2^{k-3}=2^{k-3}\geqslant 4$ tokens are needed. In this case, only three tokens remain to be played by the two players together, so $\B$ cannot build up a $(k-2)$-component in stack~2 to match the one of stack~1, irrespective of how $\A$ plays.

Now we look at the remaining case when $\B$ contributes $2^{k-3}$ tokens to the bottom layers. In this case, $\B$ can play on the upper layer of stack~2 earlier, after $\A$ has played $4(2^{k-3}-2)+2=2^{k-1}-6$ tokens. At this point, stack~1 contains $2^{k-1}-6$ tokens, and since $k \geqslant 5$, it contains a $(k-2)$-component. After the move by $\B$ shown in Figure~\ref{bottom2}, each player has exactly $(2^{k-1}-1)-(2^{k-1}-6)=5$  moves left, irrespective of $k$.

 \begin{center}
 \begin{figure}[h]
  \psset{xunit=1cm}
\psset{yunit=0.8cm}
\begin{pspicture}(6, 3.3)
\psline{|-|}(-0.5,-0.2)(-0.5,1)
\rput(-1,.4){$2^{k-3}$}
\multirput(0,-0.2 )(1.2, 0){5}{\es}
\multirput(0,0)(0,0.3){1}{\tokI}
\multirput(0,0.3)(0,0.3){9}{\tokII}
\multirput(1.2,0)(0,0.3){3}{\tokI}
\multirput(1.2,0.9)(0,0.3){1}{\tokII}
\multirput(2.4,0)(0,0.3){2}{\tokI}
\multirput(3.6,0)(0,0.3){2}{\tokI}
\multirput(4.8,0)(0,0.3){2}{\tokI}
\psline[linestyle=dotted, linewidth=1.2pt](-0.5,1.0)(5.5,1.0)
\end{pspicture}
 \caption{This is the case where $\B$ contributes $2^{k-3}$ tokens to the bottom layers. When $k=5$, if he plays strategically on stack 2, he has enough remaining tokens to be able to match the $({k-2})$-component in $s_1$.}\label{bottom2}
\end{figure}
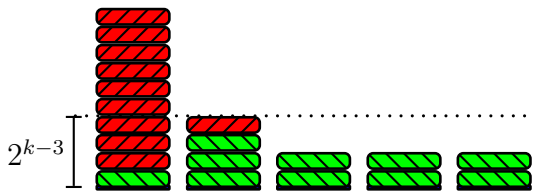
\end{center}

To build up a matching $(k-2)$-component in the second stack, $2^{k-3}$ tokens are needed (see above). If $k>5$, then $\B$ does not have enough tokens to do so on his own, and $\A$ can avoid helping in the build-up by playing high. If $k=5$, then $\B$ has enough remaining tokens to match the $3$-component by playing 4 of his tokens on stack ~2. $\A$ can counter by playing her $5$ tokens on stack ~3 (and her stack will never be higher than his). The position reached before the last token is placed by $\B$ is $(10,8,7,2,2)$, and no matter where he plays, there will be an unmatched component in stack~3 - either a  $2$-component if he does not play on stack 3 or a $3$-component if he does. If $\B$ does not build up the $3$-component in stack~2, then that component will be unmatched. Finally, if $\B$ waits longer to play his $4$th token in the bottom layers, then he will not have enough tokens left to match the $3$-component in stack 2. In all  cases, $\A$ wins.
\end{proof}

The proof of Theorem~\ref{BT} will make clear why we need to check the following cases.

\begin{lemma}\label{SpecialCases}
$\A$ has a winning strategy for $\bn(2n,5)$ for $n=5, \ldots, 12$.
\end{lemma}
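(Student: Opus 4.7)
The plan is to verify each of the eight base cases by a backward-induction analysis of the building game tree. Since $\bn(2n,5)$ is a fixed-length game of $2n\leqslant 24$ moves and building positions in canonical form $s_1\geqslant\cdots\geqslant s_5$ with $\sum s_i\leqslant 2n$ form only a few hundred states, the search is finite and quite small. Concretely, one tabulates all such positions, labels each terminal (completely filled) configuration as a win or loss for the player about to start Nim via Theorem~\ref{losingnim}, and propagates labels backward along the tree: a $\B$-to-move interior node is a $\A$-win if and only if every child is, while a $\A$-to-move interior node is a $\A$-win if and only if some child is. The eight tables for $n=5,\ldots,12$ then decide the lemma directly.

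To make the verification human-readable rather than a raw enumeration, three reductions cut the tree drastically. First, $\A$'s \emph{height} and \emph{low} templates, guided by (NS7) and the ``odd number of $k$-components'' principle highlighted before Lemma~\ref{2^k-2}, immediately close many branches once $\A$ has built a stack exceeding half of the tokens not lying in matched pairs. Second, Lemma~\ref{DS8} lifts any already-known winning strategy for $\bn(2\delta,5)$ with $\delta<n$ to a strategy for $\bn(2n,5)$ whenever the first $2\pi$ tokens have been placed to form two stacks of height $\pi=2^k$, so whenever a smaller case in the list has been verified it can be reused for a larger one. Third, symmetry under permutation of equal-height stacks eliminates further duplicates. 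After applying these, each of $n=5,\ldots,12$ reduces to a small explicit tree whose surviving leaves can be laid out using the same style of diagram as in Lemma~\ref{strats}, specifying $\A$'s opening move and her response to every reply of $\B$.

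The main obstacle is not conceptual but organizational: there is no uniform strategy covering all eight values of $n$, and optimal play for the smallest cases ($n=5,6$) is genuinely ad hoc, so the real risk is accidentally overlooking a branch. A short computer program suffices to certify that no branch has been missed, after which the human-readable strategy descriptions merely document $\A$'s moves along the surviving tree; this matches the authors' explicit plan, stated before Theorem~\ref{BT}, of letting the computer confirm the finitely many base cases before launching the inductive argument for larger $n$.
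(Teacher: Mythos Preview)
Your proposal is correct and follows essentially the same approach as the paper: the authors also verify these eight base cases by a backward-induction computer search on the building game tree, labeling each terminal position $(x_1,\ldots,x_5;0)$ by its Nim-sum and propagating $\P/\N$ labels upward to $(0,\ldots,0;2n)$. The paper does not attempt the human-readable reductions you outline but simply states the algorithm and reports its output, so your additional remarks about height/low templates, Lemma~\ref{DS8}, and symmetry are embellishments on the same underlying verification rather than a different method.
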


\begin{proof}
These cases have been checked by a computer program, using the following natural algorithm derived from the definition of $\P$- and $\N$-positions: given a number $2n$ of tokens,
we start by computing the outcomes of the positions $(x_1,\ldots,x_5;0)$. Clearly those which satisfy $x_1\oplus\cdots\oplus x_5=0$ are $\P$, and the other ones are $\N$. Now, for all $\xi$ from $1$ to $2n$ we compute the outcomes of all positions $(x_1,\ldots,x_5;\xi)$ with $x_1+\cdots+x_5=2n-\xi$ as follows: a position $(x_1,\ldots,x_5;\xi)$ is $\N$ if at least one of its options is $\P$, otherwise it is $\P$. Note that each position $(x_1,\ldots,x_5;\xi)$ admits five options, namely
$$
\{(x_1+t_1,x_2+t_2,x_3+t_3,x_4+t_4,x_5+t_5;\xi-1)\;:\;t_i\in\{0,1\}, \sum_{i=1}^{5}t_i=1\}.
$$
A computation of the outcomes of positions $(0,\ldots,0;2n)$ for $n=5, \ldots, 12$ shows they are $\N$.
\end{proof}

We are now ready to prove Theorem~\ref{BT}.

\begin{proof}[Proof of Theorem~\ref{BT}]
We must prove that $\A$ wins $\bn(2n, 5)$ if and only if $n \geqslant 5$. By Theorem~\ref{s>3}, $\B$ wins if $n<5$. For $n\geqslant 5$ a given integer, let $k(n)$ be the unique integer such that $2^{k(n)-1}<n\leqslant 2^{k(n)}$ and let $p(n)=2^{k(n)-1}$ be the largest power of $2$ strictly less than $n$.  For convenience, we will replace $k(n)$ and $p(n)$ by $k$ and $p$  respectively, when the context is clear. We will proceed by induction on $n$. According to Lemma~\ref{SpecialCases}, $\A$ wins for all $5\leqslant n\leqslant 12$, and in particular when $k(n)=3$. Let $n>12$ and assume that $\A$ wins for all $m$ with $3\leqslant k(m)<k(n)$.

If $n=2^k-1$, Lemma~\ref{2^k-2} applies and $\A$ wins. If  $n\ne 2^k-1$, then
 one of $\A$'s strategies will be to reuse the winning strategy of a smaller game if $\B$ matches a power of two, $\pi$, in the $\pi$-phase; see the respective cases (ii) below. By Lemma~\ref{DS8}, she has to be careful to choose an appropriate $\pi$, because $\bn(2\delta, 5)$ is not a first player win for $\delta \leqslant 4$. Therefore, we consider two cases.

{\bf Case 1:} $n-p>4$.

Here $\A$ chooses $\pi=p$, so $\delta=n-p > 4$ and $k(\delta) \geqslant 3$. In the $\pi$-phase (which consists of playing $p$ tokens each) $\A$ plays high, independent of $\B$'s responses. Then $\A$ adjusts her strategy for the $\delta$-phase depending on the play of $\B$ in the $\pi$-phase:

(i) If $\B$ played at least one of his $\pi$-phase tokens on the tallest stack, then $\A$ continues to play high in the $\delta$-phase.  At the end of {\sc building}, the maximum stack height will be larger than the sum of the other stacks, and $\A$ wins by (NS7) (Lemma~\ref{DSfact}).

(ii) If $\B$ has matched $\A$'s play on stack 2, then by the contraposition of Lemma~\ref{DS8}, $\A$ wins $\bn (2n,5)$ as she has a winning strategy for $\bn (2\delta, 5)$. Indeed, the conditions of Lemma~\ref{DS8} are fulfilled since $\delta<2\pi$, and by induction hypothesis, $\A$ can win $\bn(2\delta,5)$ since $3\leqslant k(\delta)<k(n)$.

(iii) The remaining case is that $\B$ neither played on the tallest stack nor matched play by $\A$ on the second stack. This means that $s_1=\pi$, $s_2<\pi$, and $s_3>0$. The case $n=2^{k}=2\pi$ is trivial since $\A$ continues to play high on stack 1 in the $\delta$-phase and wins, because there will be an unmatched $k$-component in stack 1 when {\sc nim} starts. Since $n\ne 2^k-1$, we may assume that $n<2^k-1\leqslant 2\pi-1$, that is
\begin{align}\label{<delta}
\delta <\pi-1.
\end{align}

In this case, $\A$ adjusts her moves in response to $\B$'s play. The strategy of $\A$ hinges on whether $\B$ will be able to build stack 2 to a height of $\pi$ to match stack 1 or not. To prove that $\A$ has a winning strategy, we keep track of the stack heights after each pair of moves. After $\xi$ moves have been played by both players in the $\delta$-phase, starting with $\A$, each player has $\delta-\xi$ tokens to play, and we denote the number of tokens on the $i^{\text{th}}$ stack by $s_i(\xi)$. We discuss two ways in which $\A$ can win. 

If at the end of {\sc building}, $\B$ has failed to match stack $s_1$ (by not having built up stack $s_2$ to a level of $\pi$ tokens), then there will be an unmatched $(k-1)$-component, so $\A$ wins. We claim that this type of position will be reachable for $\A$ if, after $\xi$ moves by each player in the $\delta$-phase, the number of tokens to be played by P2 is insufficient to cover the gap between $s_2(\xi)$ and $\pi$ (see Figure~\ref{single_pi}).

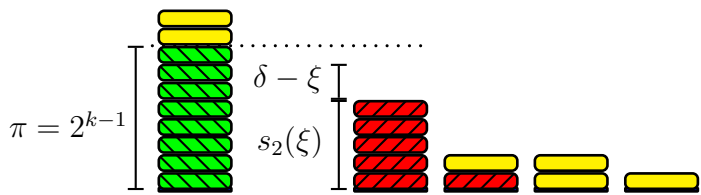
\begin{figure}[htb]
\begin{center}
\psset{xunit=1cm}
\psset{yunit=0.8cm}
\begin{pspicture}(6,3)
\psline{|-|}(-.7,-0.2)(-.7,2.2)
\rput(-1.6,.9){$\pi=2^{k-1}$}
\multirput(-0.2,-0.2 )(1.2, 0){1}{\es}

\multirput(-.2,0)(0,0.3){8}{\tokI}
\multirput(-.2,2.4)(0,0.3){2}{\tok}

\multirput(2.4,-0.2 )(1.2, 0){4}{\es}
\multirput(2.4,0)(0,0.3){5}{\tokII}
\multirput(3.6,0)(0,0.3){1}{\tokII}
\multirput(3.6,0.3)(0,0.3){1}{\tok}
\multirput(4.8,0)(0,0.3){2}{\tok}
\multirput(6,0)(0,0.3){1}{\tok}

\psline{|-|}(2,1.3)(2,1.9)
\psline{|-|}(2,-0.2)(2,1.3)

\rput(1.32,1.65){$\delta-\xi$}
\rput(1.35,.6){$s_2(\xi)$}
\psline[linestyle=dotted, linewidth=1.2pt](-0.5,2.2)(3.1,2.2)
\end{pspicture}
\end{center}
\caption{A $\A$ strategy for achieving a single $(k-1)$-component.}\label{single_pi}
\end{figure}

Suppose that
\begin{equation*}
s_2(\xi)+(\delta-\xi)<\pi
\end {equation*}
or equivalently,
\begin{equation}\label{Eq1}
s_2(\xi)-\xi < \pi-\delta
\end {equation}
holds at some stage in the $\delta$-phase. The claim is that $\A$ can ensure it still holds for the rest of {\sc building}, by always playing high. Then
\begin{align}\label{Eq2}
s_2(\xi+1)-(\xi +1)\leqslant s_2(\xi)+1-(\xi+1) < \pi-\delta,
\end{align}
and inequality~\eqref{Eq1} will hold also for $\xi +1$.

On the other hand, if P2 has enough tokens to complete the second stack to size $\pi$, then $\A$ wins if she builds up stack $s_3$ to a height of more than $\delta$ tokens. Indeed, at the end of the $\delta$-phase we would have a position of the form $(\pi+x'_1,\pi+x'_2,s_3,s_4,s_5)$ and then she wins once more by (NS7) of Lemma~\ref{DSfact}, now applied to $y=s_3$ and $x'_1, x'_2, s_4, s_5$ (see~Figure~\ref{NS7}).

\begin{figure}[htb]\label{fig:10}
\begin{center}
\psset{xunit=1cm}
\psset{yunit=0.8cm}
\begin{pspicture}(6,3)
\psline{|-|}(-.7,-0.2)(-.7,2.2)
\rput(-1.6,.9){$\pi=2^{k-1}$}
\multirput(-0.2,-0.2 )(1.2, 0){1}{\es}

\multirput(-.2,0)(0,0.3){8}{\tokI}
\multirput(-.2,2.4)(0,0.3){1}{\tok}

\multirput(1,-0.2 )(1.2, 0){4}{\es}
\multirput(1,0)(0,0.3){8}{\tokII}
\multirput(2.2,0)(0,0.3){1}{\tokII}
\multirput(2.2,0.3)(0,0.3){5}{\tokI}
\multirput(3.4,0)(0,0.3){2}{\tok}
\multirput(4.6,0)(0,0.3){1}{\tok}

%

\end{pspicture}
\end{center}
\caption{Two matched $(k-1)$-components and a tall third stack, allowing $\A$ a win using (NS7).}\label{NS7}
\end{figure}
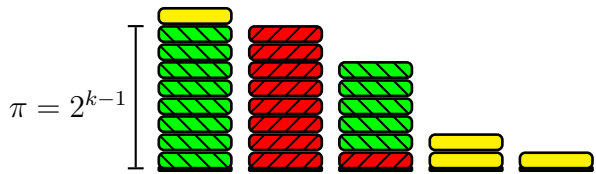

$\A$ can reach such a position if, after each move of $\B$
\begin{equation}\label{*} s_3(\xi)>\xi.\qquad  \end{equation}
Note that inequality~\eqref{*} holds at the beginning of the $\delta$-phase, as we assumed that $s_3=s_3(0)>0$.  It now remains to be seen whether the inequality can be maintained throughout. We may assume w.l.o.g. that inequality (\ref{Eq2}) does not hold for any $\xi$, as otherwise $\A$ wins by playing high. Now assume that  $s_3(\xi)-\xi>0$ and that $\A$ plays on $s_3$. Then, since $\B$ may also play on $s_3$, we have that
\begin{align}\label{Eq3}
s_3(\xi+1)-(\xi+1) \geqslant s_3(\xi)+1-(\xi+1)>0,
\end{align} unless $s_2(\xi)=s_3(\xi)$. This case could result in $s_2(\xi +1)=s_3(\xi)+1$ and $s_3(\xi +1)=s_3(\xi)$ (if $\B$ does not play on either stacks 2 or 3), so inequality (\ref{Eq3}) does not hold any longer. However, since
\eqref{Eq2} does not hold for any $\xi$ by assumption, we have
\begin{align}\label{Eq4}
s_3(\xi+1)-(\xi+1) = s_3(\xi) -(\xi+1)= s_2(\xi +1)-1-(\xi+1) \geqslant \pi-\delta-1>0
\end{align}
by \eqref{<delta}.
Thus, either \eqref{*} can be maintained throughout, or $\A$ can switch to playing high if \eqref{Eq1} holds at some point in the $\delta$-phase.

{\bf Case 2:} $1\leqslant n-p \leqslant 4$.

(i) If $\B$ played his first $p/2$ tokens on the second stack, $\A$ chooses $\pi=p/2=2^{k-2}$, so $\delta = n-\pi=n-2^{k-2}$. Therefore, $2^{k-2} < \delta \leqslant 2^{k-1}$, so $k(\delta)=k(n)-1$.  Note that since $n>12$, we have that $k(n)\geqslant 4$, and by induction hypothesis, $\bn(2\delta, 5)$ is winning for $\A$ since $3\leqslant k(\delta)= k(n)-1<k(n)$. $\A$ can apply her winning strategy on top of the matched stacks in the $\delta$-phase because
 $$ x_1+\cdots +x_5 =2n-2\pi =2n-p\leqslant n+4\leqslant p+8\leqslant 4\cdot2^{k-2}=4\pi,$$
so the contraposition of Lemma~\ref{DS8} applies and this strategy leads to a win for $\A$. \\

(ii) If $\B$ did not play his first  $p/2$ tokens on the second stack, then $\A$ goes on playing high on the tallest stack until the end of the $\pi$-phase, with $\pi=p$, and then plays the $\delta$-phase according to Case 1 (i) or (iii), assuring her win.
\end{proof}

\section{Discussion}

We have shown that in the case of three stacks, $\B$ has a winning strategy for $n=2^k-2$, which is no longer true for five stacks. With just three stacks, $\A$ does not have much wiggle room, and $\B$ can force a win, but with five stacks, $\A$ gains enough of an advantage in being able to play low. The proof of Lemma~\ref{2^k-2} can most likely be extended to more stacks, but in the proof of the main result, the cases where $\A$ uses a winning strategy for a smaller game on top of two stacks of size $2^i$ for some $i$ depends on verification by computer that $\A$ has winning strategies for a finite number of initial cases. The same would be true for any odd number of stacks $\ell >5$, with the number of initial cases increasing as the number of stacks increases. The conditions of Lemma~\ref{DS8} can be used to precisely define the number of initial games that are needed to use the induction argument. We do not currently have a general argument to prove that $\A$ can win these initial games but have found manual proofs for several values of $\ell$.  For many of $\A$'s winning strategies that we have checked it suffices for her to respond to $\B$'s defense attempts by playing `high or low', but we have also encountered cases where such strategies fail, where $\A$ still wins, but only by departing from `high or low' play. Conjecture~\ref{p1winsoften} suggests that $\B$ rarely wins for the interesting cases of {\sc building nim} ($n$ even and  $\ell$ odd), notably fitting the result by Singmaster \cite{Singm} that almost all games are first player wins.

In the process of our computer explorations we have also computed Grundy values for all strict {\sc building} positions for odd numbers of stacks $5\leqslant \ell \leqslant 19$ and an even number of tokens $\ell+3 < n\leqslant 34$. The Grundy function takes only the values $0, 1$ or $2$. More specifically, $\A$ moves from positions with Grundy value $0, 1$ or $2$, and $\B$ moves from positions with Grundy values $0$ or $1$. This gives rise to the following questions:
\begin{enumerate}
\item Does this observation hold in general?
\item Does this observation provide an answer to whether $\B$ only moves from Grundy value 0 in optimal play?
\end{enumerate}
We note that if the number of tokens is greater than the number of heaps, then the $\P$-positions of normal play {\sc building nim} are the same as those of the misere variation (a player who cannot move wins). Indeed, the $\P$-positions of Nim and mis\`ere Nim are the same, provided there is one heap of size at least $2$.


\begin{thebibliography}{99}

\bibitem{cite:nim} C. L. Bouton, Nim, a game with a complete mathematical theory, {\it Annals of Math.} {\bf 3} (1905), 35--39.
\bibitem{nimgraph} E. Duch\^{e}ne and G. Renault, Vertex Nim played on graphs, to appear in {\it Theoretical Computer Science}.
 \bibitem{DH2013} M. Dufour and S. Heubach, Circular Nim Games, {\it Electronic Journal of Combinatorics} {\bf 20} (2013) \#P22.
 \bibitem{simp} R. Ehrenborg and E. Steingr\'{\i}msson, Playing Nim on a simplicial complex, {\it Electronic Journal of Combinatorics} {\bf 3} (1996) \#R9.
 \bibitem{Johns} W. Johnson, The combinatorial game theory of well-tempered scoring games. {\em International Journal of Game Theory}, 43(2), (2014) 415--438.
 \bibitem{HeLa06} P. Hegarty and U. Larsson, Permutations of the natural numbers with prescribed difference multisets, \emph{Integers} {\bf 6} (2006), Paper A3, 25pp.
\bibitem{Lar09} U. Larsson, 2-pile Nim with a restricted number of move-size imitations, {\it INTEGERS} {\bf 9} (2009), G4, 671--690.
\bibitem{Singm} D. Singmaster, Almost all games are first person games, {\it Eureka} {\bf 41} (1981) 33--37.
\bibitem{compound} W. Stromquist and D. Ullman, Sequential compounds of combinatorial games, {\em Theoretical Computer Science} {\bf 119 (2)} (1993), 311--321.
\bibitem{Wyt} W. A. Wythoff, A modification of the game of Nim, {\it Nieuw Arch. Wisk.} {\bf 7} (1907), 199--202.
\end{thebibliography}
\end{document}